\newcommand{\enquote}[1]{``#1''}
\newcommand{\norm}[1]{\left\lVert #1 \right\rVert}
\newcommand{\reals}{\mathbb{R}}
\newcommand{\eps}{\varepsilon}
\newcommand{\naturals}{\mathbb{N}}
\newcommand{\rationals}{\mathbb{Q}}
\newcommand{\ipsim}{ip{\hbox{-}}sim}
\newcommand{\CS}{\mbox{Centroid-SLSH}\xspace}
\newcommand{\RS}{\mbox{Repeat-SLSH}\xspace}
\newcommand{\ES}{\mbox{Exhaustive-SLSH}\xspace}
\newcommand{\WES}{\mbox{Weighted exhaustive-SLSH}\xspace}
\newcommand{\SL}{\mbox{Shrink-lift-SLSH}\xspace}
\newcommand{\sa}{\mbox{simple-ALSH}\xspace}
\newcommand{\cs}{\mbox{centroid-SLSH}\xspace}
\newcommand{\rs}{\mbox{repeat-SLSH}\xspace}
\newcommand{\es}{\mbox{exhaustive-SLSH}\xspace}
\newcommand{\wes}{\mbox{weighted exhaustive-SLSH}\xspace}
\newcommand{\sls}{\mbox{shrink-lift-SLSH}\xspace}
\newcommand{\SLSH}{\mbox{SLSH}\xspace}
\newcommand{\ALSH}{\mbox{ALSH}\xspace}
\newcommand{\LSH}{\mbox{LSH}\xspace}
\newcommand{\up}[1]{{#1}^{\uparrow}}
\newcommand{\Edst}[2]{d_{\circ}\left(#1,#2\right)}
\newcommand{\EAngd}{d_{\angle\circ}}
\newcommand{\EAngdst}[2]{\EAngd\left(#1,#2\right)}
\renewcommand{\bar}[1]{\mkern 1.5mu\overline{\mkern-1.5mu#1\mkern-1.5mu}\mkern 1.5mu}
\title{Locality Sensitive Hashing for Set-Queries,\protect \newline Motivated by Group Recommendations}
\titlerunning{LSH for set-queries, motivated by group recommendations}
\author{Haim Kaplan}{School of Computer Science, Tel Aviv University, Israel}{haimk@tau.ac.il}{}{}
\author{Jay Tenenbaum}{School of Computer Science, Tel Aviv University, Israel}{jaytenenbaum@mail.tau.ac.il}{}{}
\authorrunning{H. Kaplan and J. Tenenbaum} %TODO mandatory. First: Use abbreviated first/middle names. Second (only in severe cases): Use first author plus 'et al.'
\keywords{Locality sensitive hashing, nearest neighbors, similarity search, group recommendations, distance functions, similarity functions, ellipsoid}
\begin{document}
	\maketitle
\begin{abstract}
	Locality Sensitive Hashing (LSH) is an effective method to index a set of points such that we can efficiently find the nearest neighbors of a query point. We extend this method to our novel Set-query LSH (SLSH), such that it can find the nearest neighbors of a set of points, given as a query.
	
	Let $ s(x,y) $ be the similarity between two points $ x $ and $ y $. We define a similarity between a set $ Q$ and a point $ x $ by aggregating the similarities $ s(p,x) $ for all $ p\in Q $. For example, we can take $ s(p,x) $ to be the angular similarity between $ p $ and $ x $ \big(i.e., $1-\frac{\angle (x,p)}{\pi}$\big), and aggregate by arithmetic or geometric averaging, or taking the lowest similarity.
	
	We develop locality sensitive hash families and data structures for a large set of such arithmetic and geometric averaging similarities, and analyze their collision probabilities. We also establish an analogous framework and hash families for distance functions. 
	Specifically, we give a structure for the euclidean distance aggregated by either averaging or taking the maximum.
	
	We leverage \SLSH to solve a geometric extension of the approximate near neighbors problem. In this version, we consider a metric for which the unit ball is an ellipsoid and its orientation is specified with the query.
	
	An important application that motivates our work is group recommendation systems. Such a system embeds movies and users in the same feature space, and the task of recommending a movie for a group to watch together, translates to a set-query $ Q $ using an appropriate similarity.
\end{abstract}
\section{Introduction}
The focus of this paper is on similarity search for queries which are sets of points (set-queries), where we aim to efficiently retrieve points with a high aggregated similarity to the points of the set-query.

Efficient similarity search for massive databases is central in many application areas, such as recommendation systems, content-based image or audio retrieval, machine learning, pattern recognition, and data analysis. The database is often composed of high-dimensional feature vectors of documents, images, etc., and we are interested in finding the near neighbors of a query vector.

Traditional tree-based indexing mechanisms do not scale well to higher dimensions, a phenomenon known as the \enquote{curse of dimensionality}.
To cope with this curse of dimensionality, Indyk and Motwani~\cite{indyk1998approximate,har2012approximate} introduced Locality Sensitive Hashing (LSH), a framework based on hash functions for which the probability of hash collision is higher for similar points than for dissimilar points.

Using such hash functions, one can determine near neighbors by hashing the query point and retrieving the data points stored in its bucket. Typically, multiple LSH functions are concatenated to reduce false positives, and multiple hash tables are needed to reduce false negatives. This gives rise to a data structure which satisfies the following property: for any query point $ q $, if there exists an $ S $-similar data point to $ q $ in the database, it retrieves (with constant probability) some $ cS $-similar data point to $ q $ for some constant $0<c<1$. This data structure is parameterized by a parameter $ \rho =\frac{\log (p_1)}{\log (p_2)}<1 $, where $ p_1 $ is the minimal collision probability for any two points of similarity at least $ S $, and $ p_2 $ is the maximal collision probability for any two points of similarity at most $ cS $.
The data structure can be built in time and space $O(1/p_1\cdot n^{1+\rho})$, and its query time is $ O(1/p_1\cdot n^\rho \log_{1/p_2}(n)) $.

Since the seminal paper of Indyk and Motwani~\cite{indyk1998approximate,har2012approximate}, many extensions have been considered for the LSH framework~\cite{lv2007multi}.
A notable extension is the work of Shrivastava and Li~\cite{shrivastava2014asymmetric}, which study the inner product similarity $ \ipsim(x,y)=x^Ty $. They find near neighbors for the inner product similarity by extending the LSH framework to allow asymmetric hashing schemes (ALSH)~\cite{Neyshabur2014},
in which we hash the query and the data points using different hash functions.
There is also an analogous LSH framework for distance functions, based on hash functions for which the probability of hash collision is higher for near points than for far points.
An important distance function to which the LSH framework has been applied is the $ \ell_p $ distance~\cite{motwani2006lower}. Datar et al.~\cite{datar2004locality} study the $ \ell_p $ distance for $ p\in (0,2]$, and present a hash based on $ p $-stable distributions.
Andoni and Indyk~\cite{andoni2006near} give a near-optimal (data oblivious) scheme for $p=2$. Recently, several theoretically superior data dependent schemes have been designed~\cite{andoni2015optimal,andoni2015tight}.

A noteworthy application of LSH is for \textit{recommendation systems}~\cite{koren2009matrix}, which are required to recommend points that are similar feature-wise to the user. \textit{Group recommendation systems}~\cite{jameson2007recommendation,masthoff2004group} are recommendation systems which provide recommendations, not only to an individual, but also to a whole group of people, and are gaining popularity in recent years. The need in such systems arises in many scenarios: when searching for a movie or a TV show for friends to watch together~\cite{o2001polylens,Yu2006}, a travel destination for a family to spend a holiday break in~\cite{jameson2004more,mccarthy2006needs}, or a good restaurant for a group of tourists to have lunch in~\cite{ardissono2001tailoring}.
In the literature of group recommendation systems, Jameson et al.~\cite{jameson2007recommendation} survey various techniques to aggregate individual user-point similarities $ s $ to a group-point similarity $ s^*$. The most famous aggregation techniques are the \textit{average similarity} which defines the aggregated similarity to be $s^*(Q, x) = \frac{1}{|Q|}\sum_{q\in Q} s(q, x) $, and the \textit{center similarity} (sometimes called \textit{Least-Misery}) which defines the aggregated similarity to be $s^*(Q, x) = \min\limits_{q\in Q}(s(q, x))$.

Most of the work to date on group recommendations is experimental on relatively small data sets. In this paper we give (the first to the best of our knowledge) rigorous mathematical treatment of this problem using the LSH framework.
LSH-based recommendation schemes are used for individual recommendations but do not naturally support group recommendations.
We extend LSH to support set-queries. We formalize this setting by introducing the notions of a set-query-to-point (s2p) similarity function, and of the novel \textit{set-query LSH} (SLSH). 

Our novel set-query LSH (SLSH) framework
extends the LSH framework to similarities between a set of points and a point (s2p similarities).
We define such a similarity between a set-query $ Q=\{q_1,\ldots,q_k\}\subset Z$ and a point
$ x\in Z$ by aggregating (e.g., averaging) point-to-point (p2p) similarities $ (s(q_1,x),\ldots, s(q_k,x)) $ where $ s:Z\times Z\to \reals_{\geq 0} $
is a p2p similarity. 
Specifically, we consider the \textit{$ \ell_p $ similarity} $s_{p}(Q,x)= \frac{1}{k}\sum_{i=1}^{k} \left(s(q_i,x)\right)^p$ for a constant $ p\in \naturals $ (of which the \textit{average similarity} $ s_{avg}(Q,x) =s_{1}(Q,x) $
is a special case), the \textit{geometric similarity} $s_{geo}(Q,x)= \prod_{i=1}^{k} s(q_i,x)$, and the \textit{center similarity} $s_{cen}(Q,x)= \min_{q\in Q} s(q,x) $ of $ s $.\footnote{For ease of presenting our ideas, we define the $ s_p $ and center similarities to be the $ p $'th and $ k $'th power of their conventional definition in the literature. Note that the results follow for the conventional definitions since maximizing a similarity is equivalent to maximizing a constant power of it.} Analogously, we can define s2p distance functions and SLSH framework for distances.
We develop hash families for which the probability of collision between a set-query $Q$ and a point $x$ is higher when $Q$ is similar to $x$ than when $Q$ is dissimilar to $x$.

\paragraph*{Our contribution}
We extend the LSH framework to a novel framework for handling set-queries (SLSH) for both distance and similarity functions, and study their set-query extensions.
We develop various techniques for designing set-query LSH schemes, either by giving an SLSH family directly for the s2p similarity at hand, or by reducing the problem to a previously solved problem for a different distance or similarity.

\subparagraph*{Simple SLSH schemes via achievable p2p similarities.}
We say that a p2p similarity $ s $ is \textit{achievable} if there exists a hash family such that the collision probability between $x$ and $y$ is exactly $ s(x,y) $. The \textit{angular}, \textit{hamming} and \textit{Jaccard} p2p similarities have this property.
We show how to construct SLSH families for the $\ell_p$ and geometric s2p similarities that are obtained by aggregating a p2p similarity which is achievable.

Many of our SLSH families for s2p similarities can be extended to \textit{weighted} s2p similarity functions, in which the contribution of each individual p2p similarity has a different weight.
For example, define the weighted geometric s2p similarity (of a p2p similarity $ s $) of a set-query $ Q $ and a data point $ x $ to be
$ s_{wgeo}(Q,x) = \prod_{i=1}^{k}\left(s\left(q_i,x\right)\right)^{w_i}$.
These weights are independent of the specific query and are given at preprocessing time.
As an example, a solution for the SLSH problem for $ s_{wgeo} $ for any \textit{achievable} p2p similarity $ s $ appears in Appendix~\ref{subsec:weightedgeom}.

Additionally, we present an SLSH scheme for the average euclidean distance which is based upon the shrink-lift transformation (the \enquote{lift} refers to the lifting transformation from Bachrach et al.~\cite{bachrach2014speeding}) which approximately reduces euclidean distances to angular distances. 
We get an average angular distance problem which we then solve using the fact that the angular similarity is achievable and inversely related to the angular distance.\footnote{
	We note that as the LSH approximation parameter $ c $ approaches 1, the required shrink approaches 0. This makes the angles between the lifted points small, which in turn deteriorates the performance of the angular similarity structure (in particular, one can show that the term in $ \log_{1/p_2}(n) $ in the query time bound of the LSH structure approaches infinity). Therefore, we conclude that the shrink-lift transformation is useful for values of $ c $ which are not too close to 1. However, note that such a property holds for any LSH-based nearest neighbors algorithm, where for approximation ratios $ c\to 1 $, the performance becomes equivalent or worse than linear scan.}

\subparagraph*{Ellipsoid ALSH.}
We define the novel \textit{euclidean ellipsoid distance} which naturally extends the regular euclidean distance. We develop an LSH-based near neighbors structure for this distance by a reduction to an SLSH problem with respect to the geometric angular distance. Recall that in the euclidean approximate near neighbor problem, the query specifies the center of two concentric balls such that one is a scaled version of the other. Analogously, in our novel ellipsoid distance, the query specifies the center and orientation of two concentric ellipsoids such that one is a scaled version of the other. If there is a point in the small ellipsoid, we have to return a point in the large one. We reduce this problem to a novel angular ellipsoid distance counterpart via the shrink-lift transformation mentioned before. In this angular distance counterpart, the distance is a weighted sum of squared angles (rather than squared distances in the euclidean ellipsoid distance).

To solve the angular ellipsoid ALSH problem, we make a neat observation that the squared angle that a point creates in the direction of an angular ellipsoid axis, is inversely related to the collision probability of the point with the hyperplane perpendicular to the axis, in the ALSH family of Jain et al.~\cite{jainhashing}. This observation reduces the problem to a weighted geometric angular similarity SLSH problem, which we finally solve as indicated above using the fact that the angular similarity is achievable.

\subparagraph*{Center euclidean distance SLSH.}
The most challenging s2p distance is the center euclidean distance which wants to minimize the maximum distance from the points of the set-query. For this distance function, we obtain an SLSH scheme when the set-query is of size 2, via a reduction to the euclidean ellipsoid ALSH problem. This reduction is based on an observation that the points of center euclidean distance at most $ r $ to a set-query of size 2, approximately form an ellipsoid.

We focus on developing techniques to construct SLSH families, but we do not compute closed formulas for $\rho$ as a function of $S$ and $c$.
These expressions can be easily derived for the simpler families but are more challenging to derive for the more complicated ones.
We leave the optimization of $\rho$ and testing the method on real recommendation data for future work.

\paragraph*{Other related work}
Since we study our novel SLSH framework, there is no direct previous work on this. That been said, there is related previous work on LSH, ALSH, and recommendation systems which are as follows.
In the literature of recommendation systems, Koren and Volinsky~\cite{koren2009matrix} discuss matrix factorization models where user-item interactions are modeled as inner products, and Bachrach et al.~\cite{bachrach2014speeding} propose a transformation that reduces the inner product similarity to euclidean distances. Regarding group recommendation systems, Masthoff and Judith~\cite{masthoff2004group} show that humans care about fairness and avoiding individual misery when giving group recommendations, and Yahia et al.~\cite{amer2009group} formalize semantics that account for item relevance to a group, and disagreements among the group members.
Regarding LSH and ALSH, Neyshabur and Srebro~\cite{Neyshabur2014} study symmetric and asymmetric hashing schemes for the inner product similarity, and show a superior symmetric LSH to that of Shrivastava and Li~\cite{shrivastava2014asymmetric}, that uses the transformation of Bachrach et al.~\cite{bachrach2014speeding}. 
As stated before, we use the ALSH family of Jain et al.~\cite{jainhashing} to solve the angular ellipsoid ALSH problem. We show that this family can be interpreted as a private case of an SLSH family for an appropriate s2p similarity, however Jain et al.~\cite{jainhashing} did not need this property, and the connection is coincidental.

\section{Preliminaries}
We use the following standard definition of a \textit{Locality Sensitive Hash Family (LSH)} with respect to a given point-to-point (p2p) similarity function $ s:Z\times Z\to \reals_{\geq 0} $. 
\begin{definition}[Locality Sensitive Hashing (LSH)]\label{def:LSH}
	Let $ c < 1,~S>0 $ and $ p_1>p_2 $. A family $ H $ of functions $ h:Z\to \Gamma $ is an $ (S, cS, p_1, p_2) $-\LSH for a p2p similarity function $s:Z\times Z \to \reals_{\geq 0}$ if for any $ x,y\in Z$,
	\begin{enumerate}
		\item If $ s(x, y) \geq S $ then $ \Pr_{h\in H}[h(x)=h(y)] \geq p_1 $, and
		\item If $ s(x, y) \leq cS $ then $ \Pr_{h\in H}[h(x)=h(y)] \leq p_2 $.
	\end{enumerate}
\end{definition}

Note that in the definition above, and in all the following definitions, the hash family $ H $ is always sampled uniformly.
Following Shrivastava and Li~\cite{shrivastava2014asymmetric} we extend the LSH framework
to asymmetric similarities $s:Z_1\times Z_2 \to \reals_{\geq 0}$ (where $ Z_1 $ is the domain of the data points and $ Z_2 $ is the domain of the queries). Here the $ (S,cS,p_1,p_2) $-\ALSH family $ H $ consists of pairs of functions $ f:Z_1\to \Gamma $ and  $ g:Z_2\to \Gamma $, and the requirement is that $ \Pr_{(f,g)\in H}[f(x) = g(y)] \geq p_1 $ if $ s(x, y) \geq S $, and $ \Pr_{(f,g)\in H}[f(x) = g(y)] \leq p_2 $ if $ s(x, y) \leq cS $.

\paragraph*{Set-Query LSH}
A special kind of asymmetric similarities are similarities between a set of points and a point (s2p similarities).
That is, similarities of the form  $ s^*:\mathcal{P}(Z,k)\times Z\to \reals_{\geq 0} $, where $ \mathcal{P}(Z,k) $ is the set of subsets of $ Z $ of size $ k $.
We focus on s2p similarity functions that are obtained by aggregating the vector of p2p similarities $ (s(q_1,x),\ldots, s(q_k,x)) $ where $ s:Z\times Z\to \reals_{\geq 0} $ is a p2p similarity function, as we discussed in the introduction. 
We call an $ (S, cS, p_1, p_2) $-\ALSH for an \textbf{s2p} similarity $s^*$, an \textit{ $(S, cS, p_1, p_2)$-\SLSH} for $s^*$.
Our focus is on s2p similarities and \SLSH families.

\paragraph*{From similarities to distances}
For distance functions we wish that close points collide with a higher probability than far points do.
Specifically, we require that $ \Pr_{h\in H}[h(x)=h(y)] \geq p_1 $ if $ d(x, y) \leq r $, that $ \Pr_{h\in H}[h(x)=h(y)] \leq p_2 $ if $ d(x, y) \geq cr $, and that $ c>1 $. We extend the LSH framework for distances to asymmetric distances and for s2p distances, and define ALSH and SLSH families as we did for similarities.
As for similarity functions, we consider s2p distance functions that are defined based on the vector of p2p distances $ (d(q_1,x),\ldots, d(q_k,x)) $. In particular, we consider the \textit{$ \ell_p $ distance} $d_{p}(Q,x)= \frac{1}{k}\sum_{q\in Q} \left(d(q,x)\right)^p$ for a constant $ p\in \naturals $ (of which the \textit{average distance} $ d_{avg}(Q,x) =d_{1}(Q,x) $
is a special case), the \textit{geometric distance } $d_{geo}(Q,x)= \prod_{q\in Q} d(q,x)$, and the \textit{center distance } $d_{cen}(Q,x)= \max_{q\in Q} d(q,x) $ of $ d $, where $ d:Z\times Z \to \reals_{\geq 0} $ is a p2p distance function.

\paragraph*{Additional definitions}
We consider the following common p2p similarity functions $ s : \reals^d \times \reals^d \to \reals_{\geq 0} $: 1) The \textit{angular similarity} $\angle sim(x,y)=1-\frac{\angle(x,y)}{\pi} $, and 2) The \textit{inner product similarity} $\ipsim(x,y)=x^Ty $~\cite{shrivastava2014asymmetric}.
We also consider the following common p2p distance functions $ d : \reals^d \times \reals^d \to \reals_{\geq 0} $: 1) The angular distance $ \angle(x,y) $, and 2) The euclidean distance $ed(x,y)=\norm{x-y}_2$.

We say that a hash family is an \textit{$(S, cS)$-\LSH} for a p2p similarity function $ s $ if there exist $ p_1 > p_2  $ such that it is an $ (S, cS, p_1, p_2) $-\LSH. An $(S, cS)$-\LSH family can be used (see \cite{indyk1998approximate,har2012approximate}) to solve the corresponding $ (S,cS) $-\LSH problem of finding an $ (S,cS) $-\LSH structure. An \textit{$ (S,cS) $-\LSH structure} finds (with constant probability) a neighbor of similarity at least $cS$ to a query $q$ if there is a neighbor of similarity at least $ S $ to $q$. We define these concepts analogously (and apply analogous versions of \cite{indyk1998approximate,har2012approximate}) for ALSH and SLSH hash families and for LSH for distances.

We denote the unit ball in $ \reals^d $ by $ B_d $ and the unit sphere in $ \reals^d $ by $S_d $. We also denote $ [n]: =\{1,\ldots,n\}$, and occasionally use the abbreviations $ (x_i)_{i=1}^{m}:=(x_1,\ldots,x_m) $ and $ \{x_i\}_{i=1}^{m}:=\{x_1,\ldots,x_m\} $. All the missing proofs appear in the appendix.

\section{Similarity schemes}\label{sec:similaritySchemes}
We call a (symmetric or asymmetric) similarity function $ s $ \textit{achievable} if there exists a hash family $ H $ such that for every query $ q $ and point $ x $, $\Pr_{(f,g)\in H}[f(q)=g(x)]= s(q,x)$ (for symmetric p2p similarity functions $ f=g $). Clearly, such an $ H $ is an $ (S,cS) $-\ALSH for $ s $ for any $ S $ and $ c $. In this section, we show that the $ \ell_p $ and geometric s2p similarity functions of an achievable p2p similarity, is by iteself achievable and therefore has and $ (S,cS) $-\SLSH. 

Note that many natural p2p similarity functions are achievable. For example, the random hyperplane hash family~\cite{andoni2006near} achieves the angular similarity function $s(x,y)= 1-\frac{\angle(x,y)}{\pi} $, the random bit hash family~\cite{gionis1999similarity} achieves the hamming similarity $ s\left((x_1,\ldots,x_d),(y_1,\ldots,y_d)\right)=\frac{|\{i|x_i=y_i\}|}{d} $, and MinHash~\cite{broder1997resemblance} achieves the Jaccard similarity $ s(S,T)=\frac{|S\cap T|}{|S\cup T|} $.

In Appendix~\ref{subsubsec:avginnerproduct}, we also give a very simple reduction from the average inner product SLSH problem to the regular inner product ALSH problem (which is not achievable).

\paragraph*{\texorpdfstring{$ \ell_p$}{Lp} similarity}
In this section, we define \textit{\rs}, and prove that it is an SLSH for the $\ell_p$ s2p similarity $ s_p $ of any achievable p2p similarity function $ s $ for any constant $ p\in \naturals $.
The intuition behind \rs is that given an LSH family that achieves a p2p similarity function $ s $, a query point $ q $ collides with a data point $ x $ on $ p $ randomly and independently selected hash functions with probability $ (s(Q,x))^p $. Thus, if we uniformly sample a point $ q\in Q $ of the set-query,\footnote{Therefore, for \rs we do not need to know the set-query size $ k $ a-priori.} and then compute $ p $ consecutive hashes of $ q $, the expected collision probability will be the $ \ell_p $ similarity of $ Q $ and $ x $. The formal definition is as follows.
\begin{definition}[\RS]
	Let $ s $ be an achievable p2p similarity function achieved by a hash family $ H_s $, let $ k $ be the size of the set-query, and let $ p\in \naturals$.
	We define the \textit{\rs of $ H_s $} to be \[ H=\left\{\left(Q \to (h_j(q_i))_{j=1}^p,x\to (h_j(x))_{j=1}^p\right) \mid i\in [k],~(h_1,\ldots,h_p)\in H_s^p\right\},\] where $ q_i $ is the $ i $'th element of the set-query $ Q=\{q_1,\ldots,q_k\} $ in some consistent arbitrary order.\footnote{Let $ A $ be a set, and let $ p\in \naturals $. We define $ A^p:=\{(x_i)_{i=1}^{p}\mid \forall i, x_i\in A\} $.}
\end{definition}

\begin{theorem}\label{thm:prepeat}
	Let $ s $ be an achievable p2p similarity function, and let $ H_s $ be a family that achieves $ s $. Then for any $ S>0 $ and $ c<1 $, the \rs of $ H_s $ is an $ (S,cS) $-SLSH for $ s_p $, the $ \ell_p $ similarity of $ s $.
\end{theorem}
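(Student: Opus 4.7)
The plan is to show that \RS is in fact \emph{achievable} for $s_p$, meaning that the collision probability equals $s_p(Q,x)$ exactly; the \SLSH property then follows immediately with $p_1 = S$ and $p_2 = cS$.

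First I would unpack the sampling distribution on $H$. Sampling uniformly from $H$ amounts to picking an index $i \in [k]$ uniformly and independently sampling $p$ hash functions $h_1, \ldots, h_p$ from $H_s$. The event $f(Q) = g(x)$ is precisely the event that $(h_j(q_i))_{j=1}^p = (h_j(x))_{j=1}^p$, i.e., that $h_j(q_i) = h_j(x)$ holds simultaneously for all $j \in [p]$.

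Next I would compute the collision probability by conditioning on the chosen index $i$ and using independence of the $h_j$'s:
\begin{align*}
\Pr_{(f,g)\in H}\bigl[f(Q)=g(x)\bigr]
&= \frac{1}{k}\sum_{i=1}^{k} \Pr_{h_1,\ldots,h_p \in H_s}\!\left[\bigwedge_{j=1}^{p} h_j(q_i)=h_j(x)\right] \\
&= \frac{1}{k}\sum_{i=1}^{k} \prod_{j=1}^{p} \Pr_{h_j\in H_s}\!\left[h_j(q_i)=h_j(x)\right] \\
&= \frac{1}{k}\sum_{i=1}^{k} \bigl(s(q_i,x)\bigr)^{p} \;=\; s_p(Q,x),
\end{align*}
where the second equality uses independence of the $p$ draws from $H_s$, and the third uses the assumption that $H_s$ achieves $s$.

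Finally I would conclude: the identity $\Pr[f(Q)=g(x)] = s_p(Q,x)$ shows that $s_p$ is achievable, so setting $p_1 = S$ and $p_2 = cS$ (with $p_1 > p_2$ since $c<1$), conditions (1) and (2) of the \SLSH definition are met directly. There is no real obstacle here beyond being careful that the two coordinates of the hash (the choice of index $i$ for the query side, and the tuple $(h_1,\ldots,h_p)$ used on both sides) are sampled independently, which is what justifies taking the expectation over $i$ outside and the product over $j$ inside.
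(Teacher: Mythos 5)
Your proof is correct and follows the same approach as the paper's: establish that the collision probability of the Repeat-SLSH family equals $s_p(Q,x)$ exactly, then read off the SLSH property with $p_1=S$, $p_2=cS$. The paper simply asserts this identity as "clear," while you spell out the conditioning on the index $i$ and the independence of the $p$ draws from $H_s$, which is the right justification.
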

\begin{proof}
	It is clear that $ \Pr_{(f,g)\in H}[f(Q)=g(x)]= s_{p}(Q,x)$ for any set-query $ Q=\{q_i\}_{i=1}^{k} $ and data point $ x $, so it is an $ (S,cS) $-SLSH for any $ S>0 $ and $ c<1 $.
\end{proof}

\paragraph*{Geometric similarity}
The geometric similarity is somewhat similar to the center similarity - both similarities are suitable when we want to enforce high similarity to all points of the set-query.
Analogously, here a query $ Q $ is mapped to $ \left(h_i(q_i)\right)_{i=1}^{k} $ where $ h_1,\ldots,h_k $ are random hash functions, each applied to a corresponding item in $ Q $. A data point $ x $ is mapped to $ \left(h_i(x)\right)_{i=1}^{k} $. It is not hard to see that the collision probability is $ s_{geo}(Q,x) $. In Appendix~\ref{apndx:similaritySchemes}, we give a formal theorem analogous to Theorem~\ref{thm:prepeat} both for the unweighted and weighted versions of the geometric similarity.

\section{Distance schemes}\label{sec:distanceSchemes}
\begin{figure}[t]
	\begin{center}
		\includegraphics[width=0.6\textwidth]{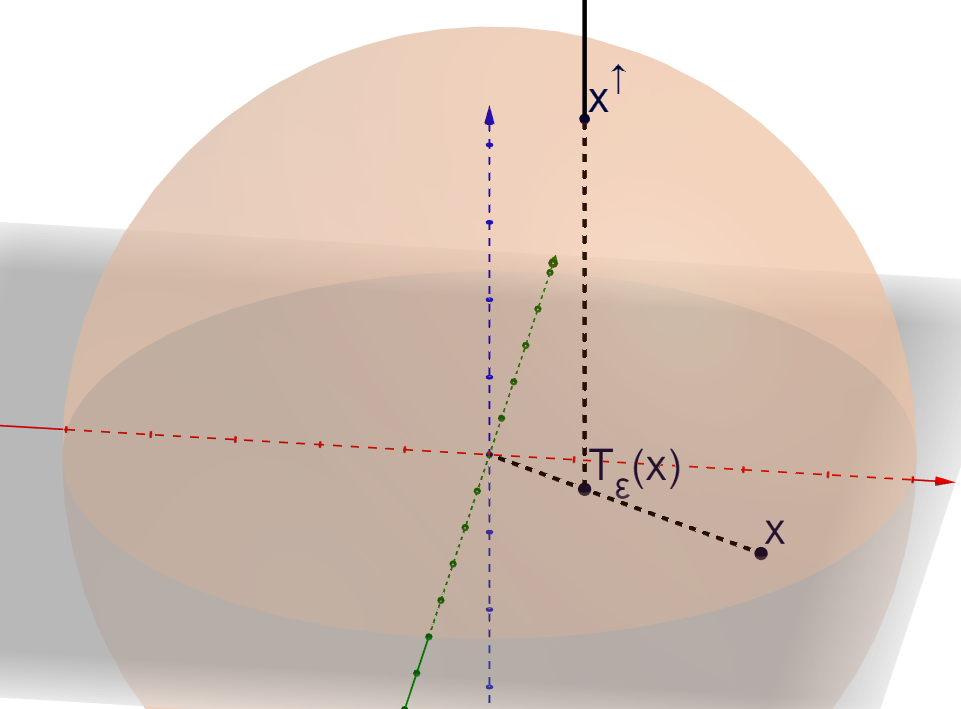}
	\end{center}
	\caption{The shrink-lift transformation $ \up{x} $.}
	%   \caption{\href{https://www.geogebra.org/3d/dhhvmd7s}{$ x $, its shrink $ T_\eps(x) $ and its shrink-lift $ \up{x} $.}}
	\label{fig:shrinklift}
\end{figure}
The notion of achievability that allowed us to construct simple SLSH families for s2p similarity functions does not naturally extend to distance functions. Nevertheless, in this section we directly design two important SLSH families for the average angular and the average euclidean distance functions.

We start with the easy observation that \rs from Section~\ref{sec:similaritySchemes} for $ p=1 $  is, as is, an SLSH family for the average angular distance (the easy proof is in Appendix~\ref{subsec:1avgangulardistance}).\footnote{This family hashes a random point from the set-query $ Q $ to $ \{-1,1\} $ by a random hyperplane.}
In the rest of this section we show how to reduce the average euclidean distance SLSH problem to the average angular distance SLSH problem.
We assume that all data points $ x $ and queries $ Q $ are in $ B_d $, and given the parameters $ r>0$ and $c>1$, we build an $ (r,cr) $-\SLSH structure for the average euclidean distance, $ed_{avg}$, as follows.

We consider the shrink transformation $T_{\eps}:\reals^d \to \reals^d$ defined by $ T_{\eps}(x)=\eps x $ for some $ \eps<\frac{1}{2} $. Additionally, we use the lifting transformation $ L:B_d \to S_{d+1} $ of Bachrach et al.~\cite{bachrach2014speeding},
defined by $ L(x)=\left(x;\sqrt{1-\norm{x}^2}\right) $. For an $ \eps $, which will always be clear from the context, we define the shrink-lift transformation $ \up{(\cdot)}:B_d \to S_{d+1} $, illustrated in Figure~\ref{fig:shrinklift}, by \textit{$ \up{x}:= L(T_{\eps}(x))$}.

The following lemma specifies the relation between the angle of the lifted points and the euclidean distance between the original points. The exact details of the reduction, including the presentation of an \SLSH structure for the average euclidean distance, appear in Appendix~\ref{subsec:1avgeuclideandistance}.
\begin{lemma}
	Let $  x,y \in B_d$ and $ \eps\in (0,\frac{1}{2}] $, and define $ m(x)=\frac{\sqrt{1+2x^2}}{\sqrt{1-2x^2}} $. Then,
	\[\eps\norm{x-y} \leq \angle(\up{x},\up{y})\leq m(\eps)\cdot\eps\norm{x-y}.\]
\end{lemma}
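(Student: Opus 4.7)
The plan is to compute $\norm{\up{x}-\up{y}}^2$ explicitly using the definition $\up{z}=(\eps z;\sqrt{1-\eps^2\norm{z}^2})$, and then use the fact that the chord of a unit sphere subtending an angle $\theta$ has length $2\sin(\theta/2)$. Writing $A=\norm{x}^2$ and $B=\norm{y}^2$, a direct computation gives
\[
\norm{\up{x}-\up{y}}^2 \;=\; \eps^2\norm{x-y}^2 \;+\; \bigl(\sqrt{1-\eps^2 A}-\sqrt{1-\eps^2 B}\bigr)^2,
\]
so the whole proof reduces to controlling this second cross-term and then passing between the chord and the angle.

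For the lower bound, the cross-term is non-negative, so $\norm{\up{x}-\up{y}} \geq \eps\norm{x-y}$. Combining this with the elementary inequality $2\sin(\theta/2)\leq \theta$ (valid for $\theta\in[0,\pi]$), and the chord-angle identity $2\sin(\angle(\up{x},\up{y})/2)=\norm{\up{x}-\up{y}}$, immediately yields $\eps\norm{x-y}\leq \angle(\up{x},\up{y})$.

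For the upper bound, I would first rationalize the cross-term as
\[
\bigl(\sqrt{1-\eps^2 A}-\sqrt{1-\eps^2 B}\bigr)^2 \;=\; \frac{\eps^4(A-B)^2}{\bigl(\sqrt{1-\eps^2 A}+\sqrt{1-\eps^2 B}\bigr)^2},
\]
then bound the numerator by $(A-B)^2=(\norm{x}-\norm{y})^2(\norm{x}+\norm{y})^2\leq 4\norm{x-y}^2$ (using $\norm{x},\norm{y}\leq 1$), and the denominator from below by $4(1-\eps^2)$. This gives $\norm{\up{x}-\up{y}}^2\leq \eps^2\norm{x-y}^2/(1-\eps^2)$. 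To convert chord to angle, I would use the inequality $\arcsin t\leq t/\sqrt{1-t^2}$ for $t\in[0,1)$ (easy to verify by differentiating the difference) with $t=\norm{\up{x}-\up{y}}/2$, obtaining $\angle(\up{x},\up{y})\leq \norm{\up{x}-\up{y}}/\sqrt{1-\norm{\up{x}-\up{y}}^2/4}$. Substituting the bound just derived and using the trivial $\norm{x-y}\leq 2$ in the square-root denominator only, one sees that $1-\norm{\up{x}-\up{y}}^2/4 \geq (1-2\eps^2)/(1-\eps^2)$, and multiplying through gives $\angle(\up{x},\up{y})\leq \eps\norm{x-y}/\sqrt{1-2\eps^2}$, which is already at most $m(\eps)\eps\norm{x-y}$ since $\sqrt{1+2\eps^2}\geq 1$.

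The main obstacle is the chord-to-angle passage: $\norm{\up{x}-\up{y}}$ can in principle be close to the diameter of the sphere when $\eps$ approaches $1/\sqrt{2}$, so the ratio $\angle/\norm{\up{x}-\up{y}}$ need not be bounded without using a size bound on the chord itself. This is precisely where the hypothesis $\eps\leq 1/2$ enters (guaranteeing $\norm{\up{x}-\up{y}}^2/4\leq \eps^2/(1-\eps^2)<1$ so that the $\arcsin$ estimate has a meaningful denominator) and where the factor $1/\sqrt{1-2\eps^2}$ inside $m(\eps)$ materializes. Everything else is algebraic manipulation; the definition of $m(\eps)$ is chosen exactly so that the two factors $\eps/\sqrt{1-\eps^2}$ (from bounding the chord) and $\sqrt{(1-\eps^2)/(1-2\eps^2)}$ (from bounding the chord-to-angle loss) telescope cleanly into $m(\eps)\eps$ up to the slack $\sqrt{1+2\eps^2}\geq 1$.
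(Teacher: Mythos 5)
Your proof is correct and follows essentially the same route as the paper's: the exact chord computation $\norm{\up{x}-\up{y}}^2=\eps^2\norm{x-y}^2+\bigl(\sqrt{1-\eps^2\norm{x}^2}-\sqrt{1-\eps^2\norm{y}^2}\bigr)^2$, the rationalization-plus-reverse-triangle-inequality bound on the cross term, and the sandwich $t\leq \sin^{-1}(t)\leq t/\sqrt{1-t^2}$ for the chord-to-angle passage. The only difference is bookkeeping (you carry the factor $1/(1-\eps^2)$ to the end rather than bounding the error term by $\frac{4}{3}\eps^4\norm{x-y}^2$ early), which even yields the marginally tighter constant $1/\sqrt{1-2\eps^2}\leq m(\eps)$.
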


\section{Euclidean ellipsoid ALSH}\label{sec:euclideanEllipsoid}
In this section we present our most technically challenging result — an example that leverages \SLSH to solve a geometric extension of the approximate near neighbor problem for the euclidean distance.
Our structure is built for a specific \enquote{shape} of two concentric ellipsoids (specified by the  weights of their axis), and their \enquote{sizes}, $ r $ and $ cr $, respectively. Given a query which defines the common center and orientation of these ellipsoids, if there is a data point in the smaller $ r $-ellipsoid, then the structure must return a point in the larger $ cr $-ellipsoid. Specifically, we define the euclidean ellipsoid distance as follows.

\paragraph*{Euclidean ellipsoid ALSH}
Let $ q=(p,\{e_i\}_{i=1}^d) $ be a \enquote{query} pair where $ p\in B_d $ is a center of an ellipsoid and $\{e_i\}_{i=1}^d $ are orthogonal unit vectors specifying the directions of the ellipsoid axes, let $ x\in B_d $ be a data point, and let $ \{w_1,\ldots,w_d\} $ be a fixed set of $d$ rational non-negative weights.

We define the \textit{euclidean ellipsoid distance } $ \Edst{q}{x}$ between $q$ and $x$ with respect to the weights
$ \{w_1,\ldots,w_d\} $ to be $ \sum_{i=1}^d w_i \left(e_i^T(x-p)\right)^2$.

In this section, we describe a structure for the euclidean ellipsoid distance $ (r,cr) -$\ALSH problem via a sequence of reductions. We reduce this problem to what we call an \textit{angular ellipsoid \ALSH} problem, which is then solved via another reduction to the \textit{weighted geometric angular similarity \SLSH} problem, which is solved in Appendix~\ref{subsec:weightedgeom}.

\begin{figure}[t]
	\begin{center}
		\includegraphics[width=0.60\textwidth]{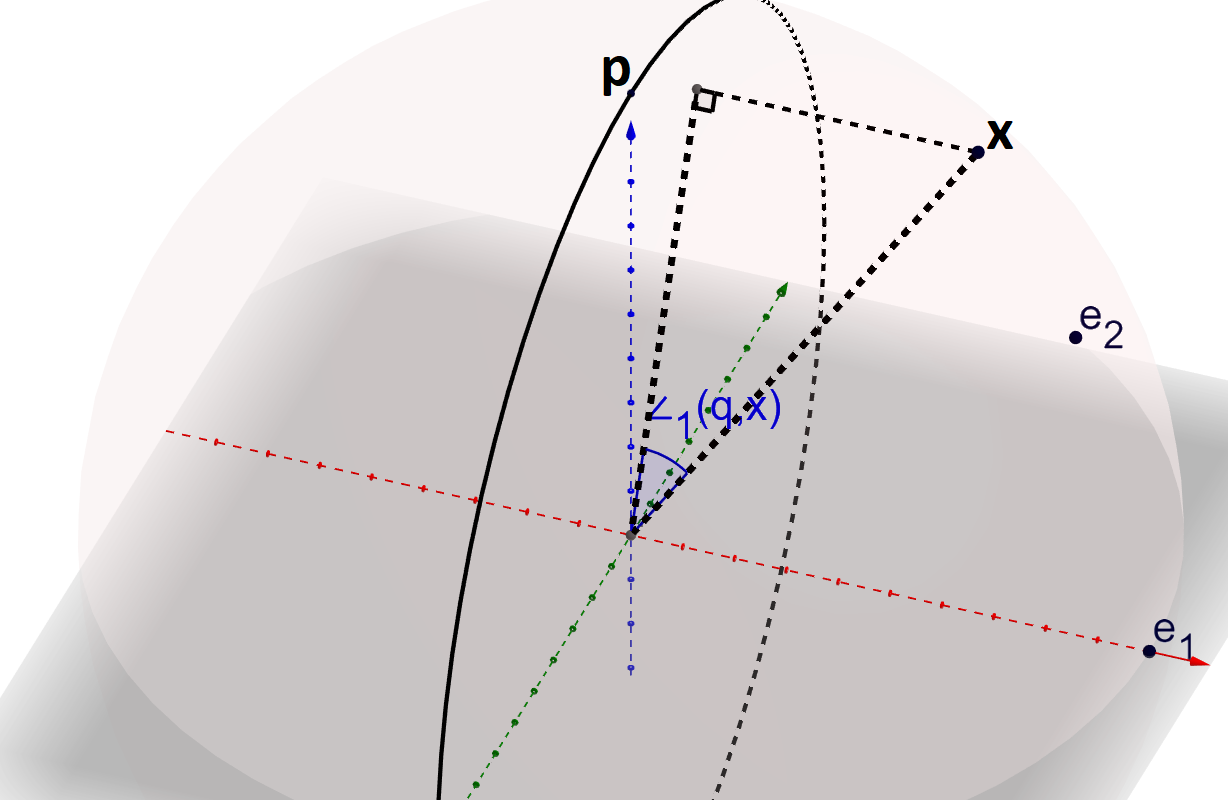}
	\end{center}
	\caption{An angular ellipsoid ALSH query $(p,\{e_i\}_{i=1}^d)  $ and $ \angle_1(q,x) $ for some $ x\in S_{d+1} $.}
	%   \caption{\href{https://www.geogebra.org/3d/wpmhtyvt}{LINK}}
	\label{fig:thetai}
\end{figure}

We give a high level description of these reductions and differ the details to Appendix~\ref{apndx:euclideanEllipsoidALSH}. The first reduction is from the euclidean ellipsoid \ALSH to what we call the \textit{angular ellipsoid \ALSH}. Recall that in Section~\ref{sec:distanceSchemes}, we have shown that for small values of $ \eps $, the shrink-lift transformation approximately reduces euclidean distances in $ B_d $ to angular distances on $ S_{d+1} $, for which we can use structures for the angular similarity to solve the associated \SLSH problems.\footnote{As stated in the introduction, we do not want to set $\eps$ to be too small since this deteriorates the performance of subsequent LSH structures we reduce to.} Here, we apply the same shrink-lift transformation to our data, and transform the ellipsoid queries to an angular counterpart defined as follows. 
An angular ellipsoid is specified by a center on the unit sphere and axes perpendicular to it. A point is inside it if the weighted sum of the squared \textbf{angles} that the point creates with the hyperplanes perpendicular to each axis and passing through the origin is smaller than $ r $. 
We formalize this as follows.

\paragraph*{Angular ellipsoid ALSH}
Let $ q=(p,\{e_i\}_{i=1}^d) $ be a \enquote{query} pair where
$ p\in S_{d+1} $ is a center of an \enquote{\textit{angular ellipsoid}}, and $\{e_i\}_{i=1}^d \subset S_{d+1}$ are unit vectors orthogonal to $ p $ (but need not be orthogonal to each other), let $ x\in S_{d+1} $ be a data point, and let $ \{w_1,\ldots,w_d\} $ be a fixed set of $d$
rational non-negative weights.

Given an index $ i\in [d] $, we define $ \angle_i(q,x)\in [0,\frac{\pi}{2}) $ to be the angle between $ x $ and its projection onto the hyperplane through the origin which is orthogonal to $ e_i $. Note that since $ e_i $ is orthogonal to $ p $, this hyperplane contains $ p $. This is illustrated in Figure~\ref{fig:thetai}, 
from which we can also observe that $ \angle_i(q,x)= \sin^{-1}\left(\left| e^T_i \cdot x\right|\right)$.

We define the \textit{angular ellipsoid distance} $ \EAngdst{q}{x}$ between $q$ and $x$ with respect to the weights
$ \{w_1,\ldots,w_d\} $ to be $ \sum_{i=1}^d w_i \cdot \angle_i(q,x)^2$. 

We prove that the shrink-lift transformation approximately maps an ellipsoid to an angular ellipsoid with the same weights, and with a center as the shrink-lift of the original ellipsoid's center, and axes which are slight \enquote{upwards} (to the direction of the axis $ x_{d+1} $) rotations of the axes of the original ellipsoid, such that they are perpendicular to the angular ellipsoid's center (see Figure~\ref{fig:mysphere}). 

\begin{figure}[t]
	\begin{center}
		\includegraphics[width=0.48\textwidth]{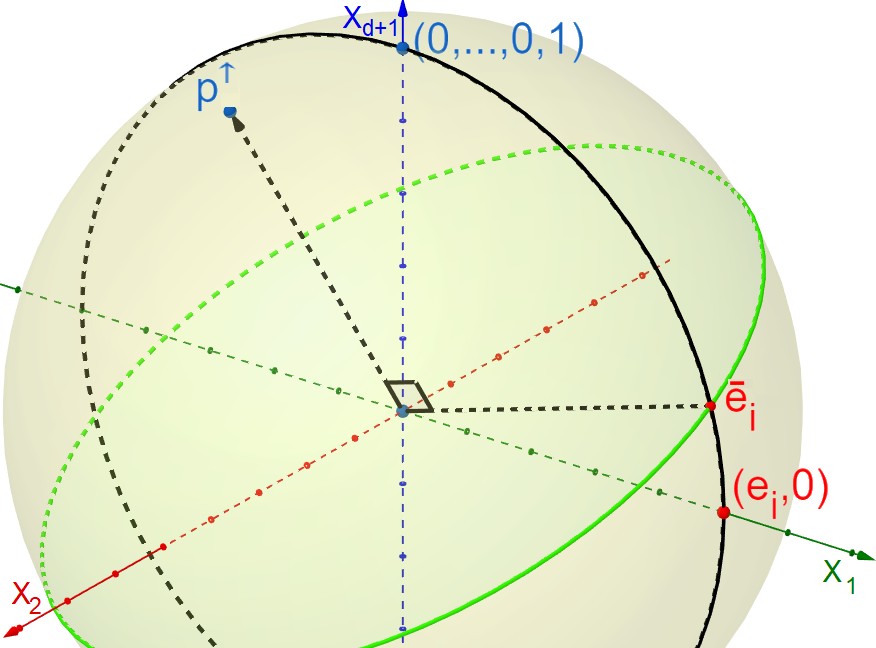}
	\end{center}
	\caption{A query $(p,\{e_i\}_{i=1}^d) $ for the euclidean ellipsoid ALSH, and a corresponding angular axis $ \bar{e_i}$ of $ e_i $.}
	%   \caption{\href{https://www.geogebra.org/3d/gtmkyucm}{LINK} with gmail mail}
	\label{fig:mysphere}
\end{figure}
We solve the angular ellipsoid \ALSH problem by reducing it to the weighted geometric angular similarity \SLSH problem. Our reduction is based on the H-hash of Jain et al.~\cite{jainhashing}, which stores points that reside on $ S_{d+1} $ such that
for a query hyperplane $h$ through the origin, we can efficiently
retrieve the data points that have a small angular distance with their projection on $h$.
H-hash in fact uses an SLSH family for the geometric angular similarity for sets of size $2$, using the following observation which we adapt to our setting. For any direction $ e $ and hyperplane $ h $ perpendicular to $ e $ through the origin, and any $ x\in S_d $, it holds that $\angle sim_{geo}(\left\{e,-e\right\},x)=(1-\angle(x,e)/\pi)(1-\angle(x,-e)/\pi)= \frac{1}{4}-\frac{\angle(x,h)^2}{\pi^2}$, where $ \angle(x,h) $ is the angle between $x$ and its projection on $h$, and the last step follows by the fact that $\min\left(\angle(x,e),\angle(x,-e)\right)= \frac{\pi}{2}- \angle(x,h)$ and $\max\left(\angle(x,e),\angle(x,-e)\right)=\frac{\pi}{2}+\angle(x,h) $.
Recall that the
angular ellipsoid distance between a query $ q=(p,\{e_i\}_{i=1}^d) $
and a point $x$ is a weighted sum of  $ (\angle_i(q,x))^2 $.
Therefore, if we hash the hyperplane orthogonal to $ e_i $ with H-hash, it will collide with higher probability with data points $ x $ with a smaller $ (\angle_i(q,x))^2 $. This suggests that we can answer an angular ellipsoid query $q= (p,\{e_i\}_{i=1}^d) $
by a weighted geometric angular similarity SLSH set-query where the set is the union of the sets $ \{e_i,-e_i\} $ for all $i\in [d] $, using the angular ellipsoid weight $ w_i $ associated with the axis $ e_i $ for each $ i\in [d] $.
Specifically, the corresponding set-query is $Q=\{e_1,-e_1,e_2,-e_2,\ldots,e_d,-e_d\} $, and the structure is built with the weights $\{w_1,w_1,w_2,w_2,\ldots,w_d,w_d\} $.
For the reduction's analysis to hold, we must require that any query $q= (p,\{e_i\}_{i=1}^d) $ and data point $ x $ satisfy $ \angle(p,x) \leq\sqrt{\frac{c-1}{c} }\cdot \frac{\pi}{4} $. This can be easily guaranteed by taking a sufficiently small value of $ \eps $ in the previous reduction from euclidean ellipsoids to angular ellipsoids, such that the set of transformed queries and data points has a sufficiently small angular diameter.

Finally, the weighted geometric angular similarity SLSH problem is solved in Appendix~\ref{subsec:weightedgeom}.

\section{Center euclidean distance for set-queries of size 2}\label{sec:cedforquerysize2}
In this section we present a data structure for the center euclidean $(r,cr)$-\SLSH problem. This is among our most technically challenging results.
Our data structure receives a set-query $ Q=\{q_1,q_2\} $ and returns (with constant probability)
a data point $ v $ such that\\ $ ed_{cen}(Q,v)=\max \left(\norm{v-q_1},\norm{v-q_2}\right)\leq cr $, if there is a data point $ v $ such that\\ $ ed_{cen}(Q,v)=\max\left(\norm{v-q_1},\norm{v-q_2}\right)\leq r $.

Our data structure requires that $c$ is larger than $c_{\min}$ where $c_{\min}=\frac{3}{2\sqrt{2}}\approx 1.06066$ is a constant slightly larger than 1.
We also assume that the possible queries
$ Q=\{q_1,q_2\} $ are such that $\frac{1}{2}\norm{q_1-q_2}<(1-\phi)r$, for a parameter $\phi < 1$ that is known to the structure.\footnote{For queries $ Q=\{q_1,q_2\} $ such that $\frac{1}{2}\norm{q_1-q_2}> r$, no point $ v $ can satisfy $ \max \left(\norm{v-q_1},\norm{v-q_2}\right)\leq r $, and returning no points for such queries satisfies our structure requirements trivially.}

We construct our structure via a reduction to the euclidean ellipsoid ALSH from Section~\ref{sec:euclideanEllipsoid}.

Consider the query $Q=\{q_a,q_{-a}\}$ to the center euclidean SLSH structure where
$q_{a}=(a,0,\ldots,0)$ and $q_{-a}=(-a,0,\ldots,0)$, for some $0<a<(1-\phi)r/2$.
Let $L^s=\{v \mid \max \left(\norm{v-q_a},\norm{v-q_{-a}}\right) \leq r\}$ be the set of point
of center distance at most $r$ from $Q$, and let $L^b=\{v \mid \max \left(\norm{v-q_a},\norm{v-q_{-a}}\right) \leq c r\}$
be the set of point
of center distance at most $cr$ from $Q$.
We also define the following two ellipsoids $S$ and $B$ centered at the origin with axes aligned with the standard axes $x_1,\ldots,x_d$:
\[S=\left\{(x_1,\ldots,x_d) \mid \frac{r+a}{r-a}x_1^2+\sum_{i=2}^{d}x_i^2 \leq r^2-a^2\right\},\]
\[B=\left\{(x_1,\ldots,x_d) \mid \frac{r+a}{r-a}x_1^2+\sum_{i=2}^{d}x_i^2 \leq  \left(\frac{cr}{c_{\min}}\right)^2-a^2\right\}.\]
\begin{figure}[t]
	\begin{center}
		\includegraphics[width=0.35\textwidth]{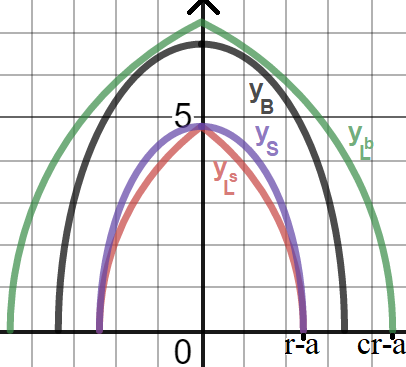}
	\end{center}
	\caption{Plots of $ y_{L^s}$, $y_S$, $y_B$, and $y_{L^b}$ as functions of $ x_1 $. $ a=3.6 $, $ r=6 $, $ c=1.35 $.}
	%    \href{https://www.desmos.com/calculator/uol9q1g65i<======with B and y_B
	%	 \href{https://www.desmos.com/calculator/jbliidc4gp<======without B and y_B
	\label{fig:ellipsoidHierarchy}
\end{figure}

Our reduction depends on the crucial observation stated in the following lemma.

\begin{lemma}\label{lma:ellipsoidsBetweenConstantHeights}
	%   \href{https://www.desmos.com/calculator/uzwcqlsmcy}{(LINK)}
	We have that $ L^s\subseteq S\subseteq B\subseteq L^b $.
\end{lemma}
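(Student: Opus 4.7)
The plan is to exploit the symmetry of $L^s$ and $L^b$ about the hyperplane $x_1=0$ together with the rotational symmetry of all four sets about the $x_1$-axis. For $v=(x_1,\ldots,x_d)$ set $u:=|x_1|$ and $\rho^2:=\sum_{i=2}^d x_i^2$. A direct computation gives $\norm{v-q_{\pm a}}^2=(x_1\mp a)^2+\rho^2$, so $\max(\norm{v-q_a}^2,\norm{v-q_{-a}}^2)=(u+a)^2+\rho^2$. Hence $v\in L^s$ iff $(u+a)^2+\rho^2\le r^2$, $v\in L^b$ iff $(u+a)^2+\rho^2\le (cr)^2$, and membership in $S,B$ reduces to $\tfrac{r+a}{r-a}u^2+\rho^2\le r^2-a^2$ and $\tfrac{r+a}{r-a}u^2+\rho^2\le (cr/c_{\min})^2-a^2$ respectively. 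I will establish the three inclusions in these $(u,\rho)$ coordinates.

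For $L^s\subseteq S$, I would take any $v\in L^s$, substitute the upper bound $\rho^2\le r^2-(u+a)^2$ into $\tfrac{r+a}{r-a}u^2+\rho^2-(r^2-a^2)$ and simplify. After expanding $(u+a)^2$ the $a^2$ and pure $u^2$ terms collapse to $\tfrac{2au}{r-a}\bigl(u-(r-a)\bigr)$. Since $v\in L^s$ forces $0\le u\le r-a$, this quantity is non-positive, which proves the containment; the two surfaces touch at $u=0$ and $u=r-a$, consistent with Figure~\ref{fig:ellipsoidHierarchy}.

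The inclusion $S\subseteq B$ is immediate since the two ellipsoids share the same left-hand side, so the containment reduces to $r^2-a^2\le (cr/c_{\min})^2-a^2$, which is exactly the hypothesis $c\ge c_{\min}$.

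The remaining inclusion $B\subseteq L^b$ is where I expect the real work, and where the particular value $c_{\min}=3/(2\sqrt{2})$ gets calibrated. For $v\in B$, I bound $(u+a)^2+\rho^2$ from above using $\rho^2\le (cr/c_{\min})^2-a^2-\tfrac{r+a}{r-a}u^2$; after expanding $(u+a)^2$ and cancelling the $a^2$ terms, the bound simplifies to $(cr/c_{\min})^2+g(u)$ where $g(u)=2au\bigl(1-\tfrac{u}{r-a}\bigr)$. For $u\ge r-a$ we have $g(u)\le 0$ and the desired inequality is immediate, so I may assume $u\in[0,r-a]$. On this interval $g$ is a downward parabola maximised at $u=(r-a)/2$ with value $a(r-a)/2\le r^2/8$ (using the AM-GM bound $a(r-a)\le r^2/4$). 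It therefore suffices to show $r^2/8\le (cr)^2-(cr/c_{\min})^2=(cr)^2\bigl(1-c_{\min}^{-2}\bigr)$. Plugging in $c_{\min}^2=9/8$ makes the right-hand side $(cr)^2/9$, and the requirement collapses to $c^2\ge 9/8$, i.e.\ $c\ge c_{\min}$, which is exactly the hypothesis. Identifying this tight balance — between $c_{\min}$, the maximum slack of $g$, and the scale factor between $S$ and $B$ — is the main conceptual step of the proof.
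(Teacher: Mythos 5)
Your proposal is correct and follows essentially the same route as the paper's proof: reduce by symmetry to comparing quadratics in $|x_1|$, obtain $L^s\subseteq S$ from the factorization $\tfrac{2au}{r-a}\bigl((r-a)-u\bigr)\ge 0$, obtain $S\subseteq B$ directly from $c\ge c_{\min}$, and obtain $B\subseteq L^b$ by absorbing the maximal slack of the downward parabola into the gap created by $c_{\min}^2=9/8$. The only cosmetic difference is that you maximize the parabola explicitly and invoke AM--GM ($a(r-a)\le r^2/4$), whereas the paper packages the identical computation as the non-positivity of a discriminant, $-\tfrac{1}{4a}(r-a)(r-2a)^2\le 0$.
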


To illustrate the relation between $L^s$, $S$, $B$, and $L^b$,
we denote the distances of their boundaries from the axis $ x_1 $ by $y_{L^s}(x_1),~y_{S}(x_1),~y_{B}(x_1)$ and $ y_{L^b}(x_1)$, respectively. These functions are plotted in Figure~\ref{fig:ellipsoidHierarchy}.

Intuitively, our reduction will replace $L^s$ by $S$ and $L^b$ by $B$: If there is a point $x$ in $L_s$ then $x$ is also in $S$ and
the euclidean ellipsoid structure will find a point in $B$ which is in $L^b$.
Specifically, we would like to query with $\{q_a,q_{-a}\}$ a euclidean ellipsoid $ (r',c'r') $-\ALSH structure where
$r' = r^2-a^2$, $c'$ is set such that $c'r' =  \left(\frac{cr}{c_{\min}}\right)^2-a^2$, and the weights are $ \left\{\frac{r+a}{r-a},1\ldots,1\right\} $.

The problem is that $a$ depends on the query (it is half the distance between the query points) and
obviously we cannot prepare a different euclidean ellipsoid $ (r',c'r') $-\ALSH structure for each query.
To overcome this we quantize the range of possible values of $a$ and construct a data structure for each
quantized value. The range of the possible values for $a$ is $[0,(1-\phi)r]$ and our quantization consists of the values
$i\cdot \delta$ for $i=0,\ldots \lceil\frac{(1-\phi)r}{\delta}\rceil$ where
$\delta=\min\left(\frac{1}{2},1-\sqrt{\frac{c_{\min}}{c}}\right)\phi r $.\footnote{To ensure rationality of weights, if $ \delta $ is irrational, we replace it by $\rationals_{> 0}\ni \delta'<\delta $.}$ ^, $\footnote{
	Intuitively, when $ c $ is close to $ c_{\min} $, and when $ \phi $ is small, our quantization is finer.}

The euclidean ellipsoid $ (r',c'r') $-\ALSH structure corresponding to the value $i\cdot \delta$ has
$r' =\frac{c}{c_{\min}}\cdot \left(r^2-(i\cdot \delta)^2\right)$, $c'=\frac{c}{c_{\min}}$ and
weights $\left\{\frac{r+i\cdot \delta}{r-i\cdot \delta},1,\ldots,1\right\}$.
For correctness we will prove that the ellipsoids\\ 
$
S^+=\left\{(x_1,\ldots,x_d) \mid \frac{r+a'}{r-a'}x_1^2+\sum_{i=2}^{d}x_i^2 \leq \frac{c}{c_{\min}}\cdot \left(r^2-(a')^2\right)\right\}$ \text{ and}\\
$B^-=\left\{(x_1,\ldots,x_d) \mid \frac{r+a'}{r-a'}x_1^2+\sum_{i=2}^{d}x_i^2 \leq   \left(\frac{c}{c_{\min}}\right)^2\cdot \left(r^2-(a')^2\right)\right\}$,
where $a'=\left\lceil \frac{a}{\delta} \right\rceil \cdot \delta$, are such that $S\subseteq S^+ \subset B^- \subseteq B$. One can easily show that $ r\geq a'\geq 0 $, so the coefficients of $ x_1^2 $ and the right hand side of the equations in $ S^+ $ and $ B^- $ are both non-negative and well-defined.

\paragraph*{Query phase} Let $Q=\{q_1,q_2\}\subseteq B_d $ be a set-query where $\norm{q_1-q_2}=2a $ for $ a\in [0,(1-\phi)r) $.
Let $a'=\left\lceil \frac{a}{\delta} \right\rceil \delta$ as before.
To get the answer, we query the
euclidean ellipsoid $ (r',c'r') $-\ALSH structure,
where $r'= \frac{c}{c_{\min}}\cdot \left(r^2-\left(a'\right)^2\right)$,
$c'= \frac{c}{c_{\min}}$ and the weights are $\left\{\frac{r+a'}{r-a'},1,\ldots,1\right\}$ with a query $q$ defined as follows.

Let $ R_{q_1,q_2} $ be a rigid transformation (rotation and translation) such that $ R_{q_1,q_2}(q_1)=q_a $ and $ R_{q_1,q_2}(q_2)=q_{-a} $ for $ q_a=(a,0\ldots,0)$ and $q_{-a}=(-a,0\ldots,0) $. We set $ q=(p,\{\bar{e_i}\}_{i=1}^d) $ where $p= R_{q_1,q_2}^{-1}\left((0,\ldots,0)\right)=\frac{q_1+q_2}{2}\in B_d $ and $ \forall i,~\bar{e_i}=R_{q_1,q_2}^{-1}\left(e_i\right) $ where $ \{e_i\}_{i=1}^d $ is the standard basis of $ \reals^d $.
Our main result is,
\begin{theorem}\label{thm:1centerToEuclideanEllipsoidReductionSLSH}
	The structure described above is an $ (r,cr) $-\SLSH structure for the center euclidean distance and queries of size 2.
	(For any $ c>c_{\min} $, and queries $ Q=\{q_1,q_2\} $ such that $\frac{1}{2}\norm{q_1-q_2}<(1-\phi)r$.)
\end{theorem}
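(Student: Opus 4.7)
The strategy is to reduce the correctness of the structure to the correctness of the underlying euclidean ellipsoid ALSH from Section~\ref{sec:euclideanEllipsoid}, with Lemma~\ref{lma:ellipsoidsBetweenConstantHeights} bridging the center-euclidean sublevel sets $L^s,L^b$ and the axis-aligned ellipsoids $S,B$. First, because the rigid transformation $R_{q_1,q_2}$ preserves euclidean distances and maps $\{q_1,q_2\}$ to $\{q_a,q_{-a}\}$, I would work throughout in the canonical frame; in that frame the query $q=(p,\{\bar e_i\}_{i=1}^d)$ specifies precisely the axis-aligned ellipsoid $\frac{r+a'}{r-a'}x_1^2+\sum_{i\geq 2}x_i^2\leq r'$. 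Lemma~\ref{lma:ellipsoidsBetweenConstantHeights} reduces everything to proving the two quantization containments $S\subseteq S^+$ and $B^-\subseteq B$: if these hold, then a data point with $ed_{cen}(Q,\cdot)\leq r$ lies (after transformation) in $L^s\subseteq S\subseteq S^+$, and the ALSH guarantee yields, with constant probability, a point in $B^-\subseteq B\subseteq L^b$, i.e.\ a point at center distance at most $cr$ from $Q$.

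Second, I would prove the two quantization containments, which is where the specific choices of $c_{\min}$ and $\delta$ come in. For $S\subseteq S^+$, fix $x\in S$; from the $S$-inequality one gets $x_1^2\leq (r-a)^2$, and a direct simplification yields
\[\frac{r+a'}{r-a'}-\frac{r+a}{r-a}=\frac{2r(a'-a)}{(r-a)(r-a')}.\]
Consequently, the left-hand side of the $S^+$-inequality at $x$ is at most $(r^2-a^2)+\frac{2r(a'-a)(r-a)}{r-a'}$, and it remains to check that
\[(r^2-a^2)+\frac{2r(a'-a)(r-a)}{r-a'}\leq \frac{c}{c_{\min}}\bigl(r^2-(a')^2\bigr).\]
Substituting $a'-a\leq \delta$, the bound $r-a'\geq \tfrac{\phi}{2}r$ (which follows from $a<(1-\phi)r$ and from $\delta\leq \tfrac{\phi}{2}r$ in the definition of $\delta$), and the other clause $\delta\leq(1-\sqrt{c_{\min}/c})\phi r$, reduces the claim to a routine algebraic verification. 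The containment $B^-\subseteq B$ is structurally identical, with both sides scaled by an extra factor $c/c_{\min}$; the hypothesis $c>c_{\min}$ is exactly what provides the slack to absorb the quantization error on the larger ellipsoid.

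Finally, I would assemble the pieces. For a query $Q=\{q_1,q_2\}$ with $\tfrac{1}{2}\|q_1-q_2\|=a<(1-\phi)r$, the quantized index $a'=\lceil a/\delta\rceil\delta$ selects one of the precomputed ellipsoid ALSH structures; note that $a'\leq (1-\tfrac{\phi}{2})r<r$, so the weight $\frac{r+a'}{r-a'}$ is a legitimate positive rational and the structure is well-defined. If there is a data point $v$ with $ed_{cen}(Q,v)\leq r$, then in the canonical frame $R_{q_1,q_2}(v)\in L^s\subseteq S\subseteq S^+$, so with constant probability the ALSH returns a point whose canonical image lies in $B^-\subseteq B\subseteq L^b$; undoing $R_{q_1,q_2}$ gives a point at center distance at most $cr$ from $Q$. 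The main obstacle is the algebraic verification of the two quantization containments in the second step, which is what pins down $c_{\min}=3/(2\sqrt 2)$ and the precise form of $\delta$; the remaining pieces are bookkeeping based on the isometric invariance of the center euclidean distance and the correctness of the ellipsoid ALSH in an arbitrarily rotated coordinate system.
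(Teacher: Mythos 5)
Your overall plan matches the paper's: work in the canonical frame via $R_{q_1,q_2}$, invoke Lemma~\ref{lma:ellipsoidsBetweenConstantHeights} for $L^s\subseteq S$ and $B\subseteq L^b$, and reduce to the quantization containments $S\subseteq S^+$ and $B^-\subseteq B$. The identity $\frac{r+a'}{r-a'}-\frac{r+a}{r-a}=\frac{2r(a'-a)}{(r-a)(r-a')}$ is also correct. However, the specific substitutions you propose for $S\subseteq S^+$ do not close the argument. Plugging $a'-a\leq\delta$ and $r-a'\geq\tfrac{\phi}{2}r$ \emph{independently} into the extra term gives
\[
\frac{2r(a'-a)(r-a)}{r-a'}\leq \frac{4\delta(r-a)}{\phi}\leq 4\bigl(1-\sqrt{c_{\min}/c}\bigr)\,r(r-a),
\]
and after dividing the resulting inequality by $(r-a)(r+a)$ one is asked to verify roughly $1+4\bigl(1-\sqrt{c_{\min}/c}\bigr)\leq\sqrt{c/c_{\min}}$, which with $t=\sqrt{c_{\min}/c}$ reads $(4t-1)(t-1)\geq 0$ and only holds when $c\geq 16\,c_{\min}$, not for all $c>c_{\min}$. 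A concrete counterexample to the ``routine verification'': with $c_{\min}=\tfrac{3}{2\sqrt 2}$, $c=1.1$, $\phi=0.1$, $r=1$, $\delta=(1-\sqrt{c_{\min}/c})\phi r\approx 0.0018$, and $a=0.001$ (so $a'=\delta$), the post-substitution left side exceeds $\tfrac{c}{c_{\min}}(r^2-(a')^2)$. The issue is that bounding $r-a'$ from below by a constant and $r-a$ from above separately is lossy; the paper instead keeps the two coupled, rewriting the left side of $S^+$ as $\frac{(r-a)^2}{(r-a')^2}\cdot(r^2-(a')^2)$ and bounding the ratio by $\frac{r-a}{r-a'}\leq\frac{r-a}{r-a-\delta}\leq\frac{\phi r}{\phi r-\delta}\leq\sqrt{c/c_{\min}}$, using that $x\mapsto\frac{x}{x-\delta}$ is decreasing together with $r-a\geq\phi r$. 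Your statement that $B^-\subseteq B$ is ``structurally identical'' is also not quite right: the paper proves it by a one-line contrapositive chain that uses only $a'\geq a$ and $c>c_{\min}$ (no $\delta$ or $\phi$), because the change of coefficient from $\frac{r+a}{r-a}$ to $\frac{r+a'}{r-a'}$ helps there rather than hurting. The remainder of your proposal (rigid-motion invariance, well-definedness of the quantized weight, and assembling the containments into the $(r,cr)$-\SLSH guarantee) is correct and matches the paper.
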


\section{Conclusions and directions for future work}
We present a novel extended LSH framework, motivated by group recommendation systems.
We define several set-query extensions for distance and similarity functions, and show how to design SLSH families and data structures for them
using different techniques. We use this framework to solve a geometric extension of the euclidean distance approximate near neighbor problem, which we call \textit{euclidean ellipsoid \ALSH}, via reduction to an \SLSH problem.
All the reductions we describe have some performance loss, which (for distance functions) is expressed by a smaller $p_1  $ and $ p_2 $, and a worse value of $ \rho $. Estimating the exact performance loss (the value of $ \rho $) and finding more efficient reductions is an interesting line of research. Finding a method for the center euclidean distance for set-queries larger than two is another intriguing open question.

\section{Acknowledgments}
We want to thank Prof. Micha Sharir and Prof. Edith Cohen for the fruitful discussions.

\bibliography{MDWOM}
\newpage	
\appendix
\section{Missing parts from Section~\ref{sec:similaritySchemes}}\label{apndx:similaritySchemes}
\subsection{Average inner product similarity}\label{subsubsec:avginnerproduct}
The inner product similarity $ \ipsim(q,x)=q^Tx $ is known not to be achievable (see ~\cite{chierichetti2017distortion}), so we cannot use \rs to create an SLSH for $\ipsim_{avg}$ (the average similarity of $ \ipsim$). 
However, we can easily reduce the average inner product SLSH to the inner product ALSH by replacing a set-query $ Q $ by its centroid $ \mu(Q)= \frac{1}{k}\sum_{q\in Q} q $.
For the inner product similarity we can use, for example, the \sa family of Neyshabur and Srebro~\cite{Neyshabur2014}, which is an \textit{ALSH} for $\ipsim$ (Theorem 5.3 in \cite{Neyshabur2014}). Specifically, we define \cs as follows.

\paragraph*{\CS}
We assume that all data points $x$ and set-queries $ Q $ are contained in $ B_d $.
Given the parameters $ S>0,~c<1 $ and the set-query size $ k $, we define the \cs structure to work as follows.
In the preprocessing phase, we store all the data points in an $(S,cS)$-\ALSH structure for $ \ipsim $, and given a set-query $ Q $, we query the $(S,cS)$-\ALSH structure with $\mu(Q) $.

\begin{theorem}
	\CS is an $ (S,cS) $-\SLSH structure for $\ipsim_{avg}$.
\end{theorem}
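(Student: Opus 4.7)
The plan is to exploit the linearity of the inner product to reduce the problem to its point-to-point version in one line, and then invoke the properties of the underlying $\ipsim$-ALSH structure as a black box.

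First, I would observe the key identity
\[
\ipsim_{avg}(Q,x) \;=\; \frac{1}{k}\sum_{q\in Q} q^T x \;=\; \left(\frac{1}{k}\sum_{q\in Q} q\right)^{\!T}\!x \;=\; \mu(Q)^T x \;=\; \ipsim(\mu(Q),x),
\]
which holds by linearity of the inner product. Thus the s2p similarity between $Q$ and a data point $x$ coincides exactly with the p2p similarity between the centroid $\mu(Q)$ and $x$.

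Next, I would verify that $\mu(Q)\in B_d$, which is immediate since $\mu(Q)$ is a convex combination of points in $B_d$ and $B_d$ is convex. This ensures that $\mu(Q)$ is a legal query to the $(S,cS)$-\ALSH structure for $\ipsim$ underlying \CS.

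Finally, I would translate the guarantee of the $(S,cS)$-\ALSH structure for $\ipsim$ into the required SLSH guarantee: if there exists a data point $v$ with $\ipsim_{avg}(Q,v)\geq S$, then by the identity above $\ipsim(\mu(Q),v)\geq S$, so querying the ALSH structure with $\mu(Q)$ returns (with constant probability) some data point $v'$ with $\ipsim(\mu(Q),v')\geq cS$, and by the identity again $\ipsim_{avg}(Q,v')\geq cS$, as required. There is no real obstacle here — the only thing to be careful about is making sure the centroid stays in the domain on which the underlying ALSH structure is defined, which is handled by the convexity of $B_d$.
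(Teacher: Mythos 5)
Your proof is correct and takes essentially the same route as the paper: both rest on the one-line identity $\ipsim_{avg}(Q,x)=\ipsim(\mu(Q),x)$ via linearity of the inner product, with the ALSH guarantee for $\ipsim$ then transferring immediately. Your extra remark that $\mu(Q)\in B_d$ by convexity is a small but sensible addition that the paper leaves implicit.
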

\begin{proof}
	The claim follows since for every set-query $ Q $ of size $ k $ and data point $ x $,
	\[{\ipsim}_{avg}(Q,x)=\frac{1}{k}\sum_{q\in Q} q^Tx=\mu(Q)^T\cdot x= \ipsim\left(\mu(Q),x\right).\]
\end{proof}

\subsection{Geometric similarity}\label{subsec:1geom}
In this section, we define \textit{\es}, and prove that it is an SLSH for the geometric similarity, $ s_{geo} $, of any achievable p2p similarity function $ s $. 

Note that the geometric similarity is somewhat similar to the center similarity - both similarities are suitable when we want to enforce high similarity to all points of the set-query. Our scheme for center similarity given in Section~\ref{sec:cedforquerysize2} is technically challenging. Thus, \es could be a simple alternative that somewhat relaxes the requirement to be similar to all points of the query for simplicity.

The intuition behind \es is that given an LSH family $ H $ that achieves a p2p similarity function $ s $, then for a set-query $ Q=\{q_1,\ldots,q_k\} $ and a point $ x $, the expected collision probability of $ (h_1(q_1),\ldots,h_k(q_k)) $ with $ (h_1(x),\ldots,h_k(x)) $ when the $ \{h_i\} $'s are sampled from $ H $, is $s_{geo}(Q,x) $. The formal definition is as follows.

\paragraph*{\ES} 
Let $ s $ be an achievable p2p similarity function achieved by a hash family $ H_s $, and let $ k $ be the set-query size.
We define the \es of $ H_s$ to be the following family of pairs\[ H=\left\{\left(Q \to (h_j(q_j))_{j=1}^k,x\to (h_j(x))_{j=1}^k\right) \mid (h_1,\ldots,h_k)\in H_s^k\right\}.\]
\begin{theorem}\label{thm:1geomSensFam}
	Let $ s $ be an achievable p2p similarity, and $ H_s $ be a family that achieves $ s $. Then the \es of $ H_s $ is an SLSH for the geometric similarity of $ s $.
\end{theorem}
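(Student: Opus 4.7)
The plan is to mirror the very short argument used for Theorem~\ref{thm:prepeat}: show that the collision probability of \ES on a set-query/data-point pair is \emph{exactly} the geometric similarity $s_{geo}(Q,x)$, and then observe that this immediately yields an $(S,cS)$-\SLSH for every $S>0$ and $c<1$ with $p_1 = S$ and $p_2 = cS$.

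Concretely, fix a set-query $Q=\{q_1,\ldots,q_k\}$ and a data point $x$, and let $(h_1,\ldots,h_k)$ be drawn uniformly from $H_s^k$. The \ES hash of $Q$ is the tuple $(h_j(q_j))_{j=1}^k$ and the hash of $x$ is $(h_j(x))_{j=1}^k$, so the two tuples are equal if and only if $h_j(q_j)=h_j(x)$ for every coordinate $j\in[k]$. Because the coordinates use \emph{independent} draws $h_1,\ldots,h_k$ from $H_s$, these $k$ equality events are mutually independent. Using the achievability hypothesis, each event has probability $\Pr_{h_j\in H_s}[h_j(q_j)=h_j(x)] = s(q_j,x)$.

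Multiplying the per-coordinate probabilities gives
\[
\Pr_{(f,g)\in H}\bigl[f(Q)=g(x)\bigr] \;=\; \prod_{j=1}^{k} s(q_j,x) \;=\; s_{geo}(Q,x).
\]
Hence for any $S>0$ and any $c<1$, if $s_{geo}(Q,x)\ge S$ then the collision probability is at least $S$, and if $s_{geo}(Q,x)\le cS$ then the collision probability is at most $cS$; this is exactly the $(S,cS,S,cS)$-\SLSH condition from Definition~\ref{def:LSH}, proving the theorem.

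There is essentially no obstacle here, since achievability of $s$ is an equality, not just an inequality: the only thing to get right is that the $k$ hash functions in the definition of \ES are drawn \emph{independently}, which is what lets the collision probability factor into a product. This is in direct analogy with \RS, where independent draws of $p$ hashes produced the $p$th power and gave $s_p$; replacing ``same query coordinate $p$ times'' by ``the $j$th query coordinate in slot $j$'' turns the $p$th power into the full product $\prod_j s(q_j,x)$ and hence into $s_{geo}$.
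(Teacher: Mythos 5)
Your proof is correct and matches the paper's argument exactly: both compute that the collision probability factors over the $k$ independent draws from $H_s$ into $\prod_{j=1}^{k}s(q_j,x)=s_{geo}(Q,x)$ by achievability, and conclude the $(S,cS)$-\SLSH property for all $S>0$, $c<1$. Your version is slightly more explicit about the role of independence and the resulting $(p_1,p_2)=(S,cS)$, but the reasoning is the same.
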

\begin{proof}
	Let $ Q =\{q_1,\ldots,q_k\}$ be a set-query of size $ k $. Since $ H_s $ achieves $ s $, for any data point $ x $ we get that
	\begin{align*}
	\Pr_{(f,g)\in H}[f(Q)=g(x)]&=\Pr_{(h_i)_{i=1}^{k}\in H_s^k}\left[\forall j\in [k],h_j(q_j)=h_j(x)\right]\\&=\prod_{j=1}^{k} s(q_j,x)=s_{geo}(Q,x).
	\end{align*}
	Therefore, for any $ S>0 $ and $ c<1 $, the \es of $ H_s $ is an $ (S,cS) $-SLSH for $ s_{geo} $.
\end{proof}

\subsection{Weighted geometric similarity}\label{subsec:weightedgeom}
In this section, we define \textit{\wes}, and prove that it is an \SLSH structure for the weighted geometric similarity $ s_{wgeo} $ of any achievable p2p similarity function $ s $. 
So far, we have only considered equal-weighted query points, however, motivated by recommending movies to a set of people, a logical extension would be giving the individuals weights according to their importance, or the strength of their general preferences. To define the \textit{weighted geometric similarity}, we use a sequence of non-negative \textbf{rational} weights $W= \{w_1,\ldots,w_k\} $, where each $ w_i$ is defined by a pair $ (a_i,b_i) $ such that $a_i\in \naturals\cup\{0\}$, $b_i\in \naturals$, and $w_i =\frac{a_i}{b_i} $, and $ k $ is the set-query size. Given $ W $ and a p2p similarity function $ s $, we define the weighted geometric similarity (of $ s $) of a set-query $ Q=\{q_1,\ldots,q_k\}$ and a data point $ x $ to be
$ s_{wgeo}(Q,x) = \prod_{i=1}^{k}\left(s\left(q_i,x\right)\right)^{w_i}$.\footnote{For weighted similarities we assume that the set-query is ordered, and this order determines the correspondence between the weights and the points in the set-query.}
In case the underlying p2p similarity function $ s $ is achievable, we reduce the weighted geometric similarity $(S,cS)$-\SLSH problem to the geometric similarity $(S',c'S')$-\SLSH problem.

\paragraph*{\WES}
Given $ S>0,~c<1 $, a p2p similarity function $ s $, the set-query size $k$, and non-negative rational weights $ \{w_i\}_{i=1}^{k} $ as defined above, we define $ m=lcm\left(\{b_i\}_{i=1}^{k}\right) \in \naturals$.\footnote{By $ lcm $ we denote the least common multiple.}
The \wes structure works as follows.
In the preprocessing phase, we store all the data points in an $ (S^m,c^mS^m) $-\SLSH structure for the geometric similarity for a set-query of size $k'=m\cdot \sum_{i=1}^{k}w_i$.\footnote{We can derive such a structure from \es (which can be applied since $ s $ is achievable).}
Given a set-query $ Q=\{q_i\}_{i=1}^{k} $, we query the structure built in the preprocessing phase, with the set-query
$ T(Q)=\left\{q_1,\ldots,q_1,\ldots, q_{k},\ldots,q_{k}\right\}$,\footnote{We allow set-queries that are in fact multi-sets. All our derivations apply to multi set-queries.} where each $ q_i\in T(Q)$ is repeated $m\cdot w_i = a_i\cdot\frac{m}{b_i}\in \naturals$ times.

\begin{theorem}\label{thm:qpIsAnSLSH}
	\WES is an $ (S,cS) $-\SLSH structure for the weighted geometric similarity $ s_{wgeo} $ of any achievable p2p similarity function $ s $.
\end{theorem}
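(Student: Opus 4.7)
The plan is to unfold the definitions and verify that the reduction performed by \wes faithfully translates the weighted $ (S,cS) $-\SLSH problem into an unweighted geometric $ (S^m,c^mS^m) $-\SLSH problem, without any approximation loss. First I would observe that for every $ i $, the exponent $ m\cdot w_i = a_i\cdot\frac{m}{b_i} $ lies in $ \naturals\cup\{0\} $ because $ b_i \mid m $. Hence the multiset $ T(Q) $ is well-defined and its size $ k'=m\sum_{i=1}^k w_i $ is a non-negative integer, so the underlying unweighted geometric \SLSH structure (obtained, for instance, from \es applied to $ H_s $, which exists because $ s $ is achievable) is indeed available for sets of size $ k' $.

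The heart of the argument is a single identity: for any set-query $ Q=\{q_1,\ldots,q_k\} $ and any data point $ x $, the unweighted geometric similarity between $ T(Q) $ and $ x $ satisfies
\[
s_{geo}(T(Q),x)\;=\;\prod_{i=1}^{k} s(q_i,x)^{m\cdot w_i}\;=\;\left(\prod_{i=1}^{k} s(q_i,x)^{w_i}\right)^{m}\;=\;\bigl(s_{wgeo}(Q,x)\bigr)^{m}.
\]
Since $ s $ takes values in $ \reals_{\geq 0} $ and the map $ t\mapsto t^m $ is monotone non-decreasing on $ [0,\infty) $, the two threshold conditions in the definition of \SLSH are equivalent under the reduction:
\[
s_{wgeo}(Q,x)\geq S\;\iff\; s_{geo}(T(Q),x)\geq S^m,\qquad s_{wgeo}(Q,x)\leq cS\;\iff\; s_{geo}(T(Q),x)\leq c^m S^m.
\]

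Finally, since $ 0<c<1 $ we have $ 0<c^m<1 $, so the pair $ (S^m,c^mS^m) $ constitutes a valid set of thresholds for the underlying geometric \SLSH problem. Querying the $ (S^m,c^mS^m) $-\SLSH structure with $ T(Q) $ therefore returns, with constant probability, a data point whose geometric similarity to $ T(Q) $ is at least $ c^m S^m $, which by the equivalence above is exactly a data point whose weighted geometric similarity to $ Q $ is at least $ cS $. There is no genuine obstacle in this argument; the only technical point that needs explicit verification is the integrality of each $ m\cdot w_i $ and hence the well-definedness of $ T(Q) $ as a multiset, which is precisely why the definition sets $ m=\mathrm{lcm}(\{b_i\}) $.
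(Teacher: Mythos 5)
Your proposal is correct and takes essentially the same approach as the paper: both proofs hinge on the identity $s_{geo}(T(Q),x)=(s_{wgeo}(Q,x))^m$ and then use the monotonicity of $t\mapsto t^m$ to conclude that the $(S^m,c^mS^m)$-\SLSH structure for $s_{geo}$ answers the $(S,cS)$ query for $s_{wgeo}$. Your extra remarks on the integrality of $m\cdot w_i$ and the validity of $c^m<1$ are sound sanity checks but do not change the argument.
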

\begin{proof}
	Observe that for any set-query $ Q=\{q_i\}_{i=1}^{k} $ of size $ k $ and any data point $ x $, it holds that
	$s_{geo}(T(Q),x)=\prod_{i=1}^{k}\left(s\left(q_i,x\right)\right)^{m\cdot w_i}=\left(\prod_{i=1}^{k}\left(s\left(q_i,x\right)\right)^{w_i}\right)^m=\left(s_{wgeo}(Q,x)\right)^m.$
	Thus, the claim follows since if there is a data point $ x$ such that $ s_{wgeo}(Q,x)\geq S $, then $ s_{geo}(T(Q),x)\geq S^m $, and the $ (S^m,c^mS^m) $-\SLSH structure finds a data point $ x$ such that $s_{geo}(T(Q),x)\geq c^mS^m $, i.e., such that $ s_{wgeo}(Q,x)\geq cS $.
\end{proof}

\section{Detailed results from Section~\ref{sec:distanceSchemes}}
\subsection{Average angular distance}\label{subsec:1avgangulardistance}
We warm up with an easy result, and show that \rs for the average angular \textbf{similarity} (Section~\ref{sec:similaritySchemes}) is an SLSH family for the average angular \textbf{distance} - a fact that follows since the average angular \textbf{similarity} is a decreasing function with respect to the average angular \textbf{distance}.

\begin{theorem}\label{thm:sasIsSLSHforavgangdist}
	\RS for the average angular \textbf{similarity} is an SLSH for the average angular \textbf{distance} $ \angle_{avg} $.
\end{theorem}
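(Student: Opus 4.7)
The plan is to exploit the simple linear relation between the angular similarity and the angular distance, and reduce the claim directly to Theorem~\ref{thm:prepeat} applied with $p=1$ to the random hyperplane hash family (which achieves the angular similarity $\angle sim$).

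First I would observe that for any set-query $Q$ of size $k$ and any point $x$,
\[
\angle sim_{avg}(Q,x) \;=\; \frac{1}{k}\sum_{q\in Q}\Bigl(1-\frac{\angle(q,x)}{\pi}\Bigr) \;=\; 1-\frac{\angle_{avg}(Q,x)}{\pi}.
\]
Hence $\angle sim_{avg}$ is a strictly decreasing affine function of $\angle_{avg}$, so a small average angular distance is exactly equivalent to a large average angular similarity. In particular, $\angle_{avg}(Q,x)\leq r$ iff $\angle sim_{avg}(Q,x)\geq 1-r/\pi$, and $\angle_{avg}(Q,x)\geq cr$ iff $\angle sim_{avg}(Q,x)\leq 1-cr/\pi$.

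Given the parameters $r>0$ and $c>1$ of the distance SLSH problem (where we may assume $cr<\pi$, otherwise the problem is trivial), I would set
\[
S \;=\; 1-\frac{r}{\pi}, \qquad c' \;=\; \frac{1-cr/\pi}{1-r/\pi} \;=\; \frac{\pi-cr}{\pi-r},
\]
so that $c'S=1-cr/\pi$ and $c'<1$. By Theorem~\ref{thm:prepeat} applied to the random hyperplane hash family with $p=1$, \RS is an $(S,c'S)$-SLSH for $\angle sim_{avg}$. Translating the two SLSH conditions via the identity above, the very same family satisfies $\Pr[f(Q)=g(x)]\geq p_1$ whenever $\angle_{avg}(Q,x)\leq r$, and $\Pr[f(Q)=g(x)]\leq p_2$ whenever $\angle_{avg}(Q,x)\geq cr$, which is exactly the definition of an $(r,cr)$-SLSH for $\angle_{avg}$.

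There is no real obstacle in this proof; the only thing to be careful about is bookkeeping the direction of the inequalities when passing from similarities to distances, and noting that $c>1$ on the distance side corresponds to $c'<1$ on the similarity side, as required by Definition~\ref{def:LSH}.
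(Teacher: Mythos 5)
Your proposal is correct and follows essentially the same route as the paper's own proof: both rely on the affine identity $\angle sim_{avg}(Q,x)=1-\angle_{avg}(Q,x)/\pi$ to translate the distance thresholds $(r,cr)$ into similarity thresholds $(1-r/\pi,1-cr/\pi)$, and then invoke Theorem~\ref{thm:prepeat} with $p=1$. You are just a bit more explicit than the paper in spelling out the resulting similarity-side approximation factor $c'=(\pi-cr)/(\pi-r)<1$ and the mild assumption $cr<\pi$.
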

\begin{proof}
	Observe that for any set-query $ Q $ of size $ k $ and data point $ x $,
	\begin{align*}
	\angle sim_{avg}(Q,x)=\frac{1}{k}\sum_{q\in Q} \left(1-\frac{\angle(q,x) }{\pi}\right) =1-\frac{\frac{1}{k}\sum_{q\in Q}\angle(q,x)}{\pi}=1-\frac{\angle_{1}(Q,x)}{\pi}.
	\end{align*}
	Thus, the claim follows since for any $ r>0 $ and $ c>1 $, by Theorem~\ref{thm:prepeat}, \rs for the average angular similarity is
	an $ (1-\frac{r}{\pi},1-\frac{cr}{\pi},p_1,p_2) $-\SLSH for $ \angle sim_{1} $ for some $ p_1>p_2 $, and specifically is an $ (r,cr,p_1,p_2) $-\SLSH for $ \angle_{avg} $.
\end{proof}
\subsection{Average euclidean distance}\label{subsec:1avgeuclideandistance}

We give a formal definition of \SL, which reduces the average euclidean distance problem to the average angular distance problem.

\paragraph*{\SL}\label{Subsubsec:shrinkliftslsh}
\SL works as follows.

\subparagraph*{Preprocessing phase.} Given the parameters $ r>0,~c>1 $ and the set-query size $ k $, define $ \eps=\frac{1}{2}\sqrt{1-\frac{2}{1+c^2}}<\frac{1}{2} $. We transform each data point $ x $ to $ \up{x} $, and store the transformed data points in an $ (r',c'r') $-\SLSH structure for average angular distance, for the parameters $r'=m(\eps)\cdot \eps r,~c'=\frac{\eps c r}{r'} = \frac{c}{m(\eps)}$ and $ k'=k$, where we define $ m:\left[0,\frac{1}{2}\right]\to\reals$ by $ m(x)=\frac{\sqrt{1+2x^2}}{\sqrt{1-2x^2}} $.

\subparagraph*{Query phase.} Let $Q$ be a set-query of size $ k $. We query the average angular distance  $ (r',c'r') $-\SLSH structure constructed in the preprocessing phase with the set-query $ Q'=\{\up{q} \mid q\in Q\}$.

In order to prove that \sls is an $(r,cr)$-\SLSH structure for the average euclidean distance, Lemma~\ref{lem:thetaBound} bounds the angle between the lifted points in terms of their original euclidean distance.
It is specified using the error function $e(\eps,x,y):=\left(\sqrt{\frac{1}{\eps^2}-\norm{x}^2}-\sqrt{\frac{1}{\eps^2}-\norm{y}^2}\right)^2 $.

\begin{lemma}\label{lem:thetaBound}
	Let $  x,y \in B_d$ and $ \eps\in (0,1] $. Then
	\[2\sin^{-1}\left(\frac{\eps}{2}\cdot \norm{x-y}\right)\leq \angle(\up{x},\up{y})= 2\sin^{-1}\left(\frac{\eps}{2} \sqrt{\norm{x-y}^2+e(\eps,x,y)}\right).\]
\end{lemma}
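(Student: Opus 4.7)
The plan is a direct calculation leveraging the standard identity
\[
\angle(u,v)=2\sin^{-1}\!\left(\tfrac{1}{2}\norm{u-v}\right)
\]
which holds for any two unit vectors $u,v$. First I would verify that this identity is applicable: by construction $\up{x}=(\eps x;\sqrt{1-\eps^2\norm{x}^2})$ satisfies $\norm{\up{x}}^2=\eps^2\norm{x}^2+(1-\eps^2\norm{x}^2)=1$, so $\up{x},\up{y}\in S_{d+1}$. Since $x,y\in B_d$ and $\eps\le 1$, the entry under the square root is non-negative, and the argument $\tfrac{\eps}{2}\norm{x-y}$ of the outer $\sin^{-1}$ on the left-hand side lies in $[0,1]$, so everything is well-defined.

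Next I would expand $\norm{\up{x}-\up{y}}^2$ by splitting into the first $d$ coordinates and the last coordinate:
\[
\norm{\up{x}-\up{y}}^2 \;=\; \eps^2\norm{x-y}^2 + \left(\sqrt{1-\eps^2\norm{x}^2}-\sqrt{1-\eps^2\norm{y}^2}\right)^2.
\]
The key algebraic step is to pull a factor of $\eps$ out of each square root: $\sqrt{1-\eps^2\norm{x}^2}=\eps\sqrt{\tfrac{1}{\eps^2}-\norm{x}^2}$, and likewise for $y$. Substituting, the second summand becomes $\eps^2\cdot e(\eps,x,y)$ by definition of $e$, giving
\[
\norm{\up{x}-\up{y}}^2 \;=\; \eps^2\bigl(\norm{x-y}^2+e(\eps,x,y)\bigr).
\]
Taking the square root and plugging into the chord-to-angle identity immediately yields the equality
\[
\angle(\up{x},\up{y}) \;=\; 2\sin^{-1}\!\left(\tfrac{\eps}{2}\sqrt{\norm{x-y}^2+e(\eps,x,y)}\right).
\]

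Finally, for the lower bound I would observe that $e(\eps,x,y)\ge 0$ (it is the square of a real number; one also checks $\tfrac{1}{\eps^2}-\norm{x}^2\ge 0$ since $\norm{x}\le 1\le 1/\eps$). Combined with the monotonicity of $\sin^{-1}$ on $[0,1]$, and the fact that both arguments lie in this interval (the larger argument is at most $\tfrac{\eps}{2}\cdot\norm{\up{x}-\up{y}}/\eps\le 1$, since $\up{x},\up{y}$ are unit vectors), we get $2\sin^{-1}(\tfrac{\eps}{2}\norm{x-y})\le 2\sin^{-1}(\tfrac{\eps}{2}\sqrt{\norm{x-y}^2+e(\eps,x,y)})$.

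Honestly there is no serious obstacle here: the proof is a short direct computation. The only point that warrants a sentence of care is the factor-of-$\eps$ extraction from the last-coordinate difference, since that is what produces the tidy expression $\eps^2\bigl(\norm{x-y}^2+e(\eps,x,y)\bigr)$ and makes the error term $e$ appear naturally; every other step is just monotonicity and the chord-angle formula for unit vectors.
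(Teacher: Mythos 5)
Your proof is correct and follows essentially the same route as the paper's: expand $\norm{\up{x}-\up{y}}^2$ coordinate-wise, factor out $\eps^2$ to reveal $e(\eps,x,y)$, and convert the chord length to an angle via the identity $\norm{a-b}=2\sin(\angle(a,b)/2)$ for unit vectors, then use non-negativity of $e$ and monotonicity of $\sin^{-1}$ for the lower bound. The paper phrases the expansion as a substitution $x\mapsto T_\eps(x)$ into a formula for $\norm{L(x)-L(y)}$ rather than splitting coordinates directly, but this is a cosmetic difference only.
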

\begin{proof}
	
	Let $  x,y \in B_d$. By the definition of $ L(\cdot) $ and the euclidean distance, we get that
	\[ \norm{L(x)-L(y)}=\sqrt{\norm{x-y}^2+\left(\sqrt{1-\norm{x}^2}-\sqrt{1-\norm{y}^2}\right)^2}. \]
	Since $ T_{\eps}(x),T_{\eps}(y)\in B_d $, we can substitute $ x \to T_{\eps}(x), y \to T_{\eps}(y) $ in the equation above. We use the definition of the shrink-lift transformation to conclude that
	\begin{align*}
	\norm{\up{x}-\up{y}}&= \sqrt{\norm{T_{\eps}(x)-T_{\eps}(y)}^2+\left(\sqrt{1-\norm{T_{\eps}(x)}^2}-\sqrt{1-\norm{T_{\eps}(y)}^2}\right)^2}\\
	&=\sqrt{\eps^2 \norm{x-y}^2+\left(\sqrt{1-\eps^2\norm{x}^2}-\sqrt{1-\eps^2\norm{y}^2}\right)^2}\\
	&=\sqrt{\eps^2\norm{x-y}^2+\eps^2\left(\sqrt{\frac{1}{\eps^2}-\norm{x}^2}-\sqrt{\frac{1}{\eps^2}-\norm{y}^2}\right)^2}\\
	&=\eps\sqrt{\norm{x-y}^2+\left(\sqrt{\frac{1}{\eps^2}-\norm{x}^2}-\sqrt{\frac{1}{\eps^2}-\norm{y}^2}\right)^2}\\
	&=\eps\sqrt{\norm{x-y}^2+e(\eps,x,y)},
	\end{align*}
	where the second equality follows since $\norm{T_{\eps}(x)-T_{\eps}(y)}=\eps\cdot \norm{x-y}$ and $ \norm{T_{\eps}(x)}=\eps\norm{x} $.
	
	Thus, we use the fact that the euclidean distance of any two points $ a,b \in S_{d+1}$ is $\norm{a-b}= 2 \sin\left(\frac{\angle(a,b)}{2}\right) $, with the points $ a=\up{x} $ and $ b=\up{y} $, to reason that
	\[\sin\left(\frac{\angle(\up{x},\up{y})}{2}\right)= \frac{1}{2}\norm{\up{x}-\up{y}} =\frac{\eps}{2} \sqrt{\norm{x-y}^2+e(\eps,x,y)}.\]
	Since $ \sin(x/2) $ is increasing for $x\in (0,\pi) $, and $ 0\leq \frac{\angle(\up{x},\up{y})}{2}\leq \frac{\pi}{2} $, we can apply $ \sin^{-1} $ on the equation above and multiply by 2 to get that \[ \angle(\up{x},\up{y})= 2\sin^{-1}\left(\frac{\eps}{2} \sqrt{\norm{x-y}^2+e(\eps,x,y)}\right) \geq 2\sin^{-1}\left(\frac{\eps}{2}\cdot \norm{x-y}\right),\] where the last inequality follows by the non-negativity of $ e(\eps,x,y) $.
\end{proof}

The following lemma bounds the error term.
\begin{lemma}\label{lem:bounde}
	For any $ x,y\in B_d $ and $ \eps\in (0,\frac{1}{2}]$, $0\leq e(\eps,x,y)\leq  \frac{4}{3}\norm{x-y}^2\eps^2$.
\end{lemma}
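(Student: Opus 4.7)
The nonnegativity $e(\eps,x,y)\geq 0$ is immediate since $e$ is defined as a square. The substance is the upper bound $e(\eps,x,y)\le \frac{4}{3}\norm{x-y}^2\eps^2$, and the plan is to rationalize the difference of square roots and then use $\norm{x},\norm{y}\leq 1$ and $\eps\leq 1/2$ to extract the factor $\tfrac{4}{3}\eps^2$.

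Concretely, I would apply the identity $(\sqrt{a}-\sqrt{b})^2=\frac{(a-b)^2}{(\sqrt{a}+\sqrt{b})^2}$ with $a=\frac{1}{\eps^2}-\norm{x}^2$ and $b=\frac{1}{\eps^2}-\norm{y}^2$ to rewrite
\[
e(\eps,x,y)=\frac{(\norm{y}^2-\norm{x}^2)^2}{\left(\sqrt{\tfrac{1}{\eps^2}-\norm{x}^2}+\sqrt{\tfrac{1}{\eps^2}-\norm{y}^2}\right)^2}.
\]
Since $\norm{x},\norm{y}\leq 1$, each square root in the denominator is at least $\sqrt{\tfrac{1}{\eps^2}-1}$, so the denominator is at least $4\bigl(\tfrac{1}{\eps^2}-1\bigr)=\frac{4(1-\eps^2)}{\eps^2}$. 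For the numerator, factor $\norm{y}^2-\norm{x}^2=(\norm{y}-\norm{x})(\norm{y}+\norm{x})$ and use the reverse triangle inequality together with $\norm{x}+\norm{y}\leq 2$ to obtain $|\norm{y}^2-\norm{x}^2|\leq 2\norm{x-y}$, hence $(\norm{y}^2-\norm{x}^2)^2\leq 4\norm{x-y}^2$.

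Combining these two estimates gives
\[
e(\eps,x,y)\leq \frac{4\norm{x-y}^2\cdot \eps^2}{4(1-\eps^2)}=\frac{\norm{x-y}^2\eps^2}{1-\eps^2},
\]
and since $\eps\leq \tfrac{1}{2}$ implies $1-\eps^2\geq \tfrac{3}{4}$, the right hand side is at most $\tfrac{4}{3}\norm{x-y}^2\eps^2$, as required. There is no real obstacle here — the only step that needs a little care is making sure the chain of inequalities $|\norm{y}^2-\norm{x}^2|\le 2\norm{x-y}$ uses both the reverse triangle inequality and $\norm{x},\norm{y}\leq 1$; the rest is purely algebraic manipulation and the final numerical bound $1-\eps^2\geq 3/4$ on $(0,1/2]$.
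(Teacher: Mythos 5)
Your proof is correct and follows essentially the same route as the paper: rationalize the difference of square roots, factor $\norm{y}^2-\norm{x}^2$, bound the numerator via the reverse triangle inequality and $\norm{x}+\norm{y}\le 2$, bound the denominator using $\norm{x},\norm{y}\le 1$, and finish with $1-\eps^2\ge 3/4$. The only cosmetic difference is that the paper lower-bounds the denominator via $a+b\ge 2\min(a,b)$ before applying $\norm{x},\norm{y}\le 1$, whereas you apply $\norm{x},\norm{y}\le 1$ to each summand directly; both yield the identical intermediate bound $\norm{x-y}^2\eps^2/(1-\eps^2)$.
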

\begin{proof}
	Let $ x,y\in B_d $ and $ \eps\in (0,\frac{1}{2}] $. The lemma follows by observing that
	\begin{align*}
	e(\eps,x,y)&=\left(\sqrt{\frac{1}{\eps^2}-\norm{x}^2}-\sqrt{\frac{1}{\eps^2}-\norm{y}^2}\right)^2= \left(\frac{\left(\frac{1}{\eps^2}-\norm{x}^2\right)-\left(\frac{1}{\eps^2}-\norm{y}^2\right)}{\sqrt{\frac{1}{\eps^2}-\norm{x}^2}+\sqrt{\frac{1}{\eps^2}-\norm{y}^2}}\right)^2\\
	&= \left(\frac{\left|(\norm{y}+\norm{x})\cdot (\norm{y}-\norm{x})\right|}{\sqrt{\frac{1}{\eps^2}-\norm{x}^2}+\sqrt{\frac{1}{\eps^2}-\norm{y}^2}}\right)^2
	\le \left(\frac{(\norm{y}+\norm{x})\cdot \norm{x-y}}{\sqrt{\frac{1}{\eps^2}-\norm{x}^2}+\sqrt{\frac{1}{\eps^2}-\norm{y}^2}}\right)^2\\
	&\le \left(\frac{2\norm{x-y}}{2\cdot \min \left(\sqrt{\frac{1}{\eps^2}-\norm{x}^2},\sqrt{\frac{1}{\eps^2}-\norm{y}^2}\right)}\right)^2\\
	&= \frac{\norm{x-y}^2}{ \min \left(\frac{1}{\eps^2}-\norm{x}^2,\frac{1}{\eps^2}-\norm{y}^2\right )}\\
	&\le \frac{\norm{x-y}^2}{\frac{1}{\eps^2}-1}=\frac{\norm{x-y}^2}{1-\eps^2}\cdot \eps^2\le \frac{\norm{x-y}^2}{1-\frac{1}{4}}\cdot \eps^2= \frac{4}{3}\norm{x-y}^2\eps^2,
	\end{align*}
	where the first equality follows by the definition of $ e(\eps,x,y) $, the second and third equalities follows from the equation $a-b=\frac{a^2-b^2}{a+b}$, the first inequality follows from the reverse triangle inequality, the second inequality follows since $0\le \norm{x},\norm{y}\le 1$ and $a+b\geq 2\min(a,b)$, and the third and fourth inequalities follow because $ \norm{x},\norm{y}\le 1 $ and $ \eps\leq \frac{1}{2}$.
\end{proof}

Next, we show the following property of $ \sin^{-1}(\cdot) $, which is used in the proof of Lemma~\ref{lma:angleEuclidDistConnection}, and later in the proof of Lemma~\ref{lma:ellipsoidToAngularBound}.
\begin{lemma}\label{lma:invsine}
	$ x\leq \sin^{-1}(x)\leq \frac{x}{\sqrt{1-x^2}} $ for any $x\in[0,1)$.
\end{lemma}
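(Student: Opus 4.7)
The plan is to prove both bounds by recognizing $\sin^{-1}(x)$ as the integral $\int_0^x \frac{1}{\sqrt{1-t^2}}\,dt$ and then bounding the integrand on the interval $[0,x]\subset [0,1)$. This is the cleanest unified approach and avoids having to check signs of derivatives or invoke differentiation of inverse functions separately.

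First I would establish the lower bound. For $t\in[0,x]\subseteq[0,1)$ we have $1-t^2\le 1$, hence $\frac{1}{\sqrt{1-t^2}}\ge 1$. Integrating from $0$ to $x$ gives $\sin^{-1}(x)=\int_0^x\frac{1}{\sqrt{1-t^2}}\,dt \ge \int_0^x 1\,dt = x$, which is the first inequality.

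Next I would establish the upper bound by bounding the integrand from above. Since $1-t^2$ is non-increasing in $t\in[0,x]$, we have $1-t^2\ge 1-x^2 > 0$, and therefore $\frac{1}{\sqrt{1-t^2}}\le \frac{1}{\sqrt{1-x^2}}$ throughout $[0,x]$. Integrating from $0$ to $x$ yields $\sin^{-1}(x)\le \int_0^x\frac{1}{\sqrt{1-x^2}}\,dt = \frac{x}{\sqrt{1-x^2}}$, which is the second inequality.

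There is no real obstacle here; the lemma is a standard elementary calculus fact, and the only thing to be careful about is that $x\in[0,1)$ (so that the integral defining $\sin^{-1}(x)$ is proper and $\sqrt{1-x^2}>0$, making the upper bound well defined). As a sanity check, both inequalities become equalities at $x=0$, which is consistent with the uniform bounds above being tight at $t=0$.
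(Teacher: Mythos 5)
Your proof is correct. The paper establishes the same bounds by applying Lagrange's mean value theorem to $\sin^{-1}$ on $[0,x]$, obtaining $\sin^{-1}(x) = \frac{x}{\sqrt{1-\mu^2}}$ for some $\mu\in(0,x)$ and then bounding $\frac{1}{\sqrt{1-\mu^2}}$ between $1$ and $\frac{1}{\sqrt{1-x^2}}$; your route instead writes $\sin^{-1}(x)$ as $\int_0^x \frac{dt}{\sqrt{1-t^2}}$ and bounds the integrand pointwise. These are essentially the same argument — both reduce to bounding the derivative $\frac{1}{\sqrt{1-t^2}}$ over $[0,x]$ — with the fundamental theorem of calculus substituted for the mean value theorem; neither buys anything substantial over the other.
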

\begin{proof}
	Let $0\leq x< 1$. Since $ \sin^{-1}$ is differentiable in $ [0,x] $, then by Lagrange's mean value theorem, there exists a $ \mu\in (0,x) $ such that $ \sin^{-1}(x)-\sin^{-1}(0)= (\sin^{-1})'(\mu)\cdot (x-0)$, i.e., $ \sin^{-1}(x)= (\sin^{-1})'(\mu)\cdot x$.
	The lemma follows since $(\sin^{-1})'(\mu)= \frac{1}{\sqrt{1-\mu^2}}\in \left(\frac{1}{\sqrt{1-x^2}},1\right)$ for $\mu\in (0,x)$.
\end{proof}

Then, we use Lemmas~\ref{lem:thetaBound},~\ref{lem:bounde} and~\ref{lma:invsine} to derive the following important Lemma.
\begin{lemma}\label{lma:angleEuclidDistConnection}
	Let $  x,y \in B_d$ and $ \eps\in (0,\frac{1}{2}] $. Then,
	$\eps\norm{x-y} \leq \angle(\up{x},\up{y})\leq m(\eps)\cdot\eps\norm{x-y}.$
\end{lemma}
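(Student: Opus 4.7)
The plan is to derive both bounds directly from Lemma~\ref{lem:thetaBound}, using Lemma~\ref{lma:invsine} to convert between $\sin^{-1}$ and its linear surrogates, and Lemma~\ref{lem:bounde} to control the error term $e(\eps,x,y)$.

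For the lower bound, Lemma~\ref{lem:thetaBound} gives $\angle(\up{x},\up{y}) \ge 2\sin^{-1}\!\bigl(\tfrac{\eps}{2}\norm{x-y}\bigr)$. Since $\norm{x-y}\le 2$ and $\eps\le 1/2$, the argument lies in $[0,1)$, so the left inequality of Lemma~\ref{lma:invsine} yields $\sin^{-1}\!\bigl(\tfrac{\eps}{2}\norm{x-y}\bigr)\ge \tfrac{\eps}{2}\norm{x-y}$, and multiplying by $2$ gives the claimed lower bound $\eps\norm{x-y}$.

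For the upper bound, set $t=\tfrac{\eps}{2}\sqrt{\norm{x-y}^{2}+e(\eps,x,y)}$, so that by Lemma~\ref{lem:thetaBound}, $\angle(\up{x},\up{y})=2\sin^{-1}(t)$, and by the right inequality of Lemma~\ref{lma:invsine}, $2\sin^{-1}(t)\le \tfrac{2t}{\sqrt{1-t^{2}}}$. The task is to bound $t^{2}$ tightly enough that $1-t^{2}$ stays safely away from zero. From the intermediate estimate inside the proof of Lemma~\ref{lem:bounde} we have $e(\eps,x,y)\le \tfrac{\eps^{2}\norm{x-y}^{2}}{1-\eps^{2}}$, whence $\norm{x-y}^{2}+e(\eps,x,y)\le \tfrac{\norm{x-y}^{2}}{1-\eps^{2}}$ and therefore $t^{2}\le \tfrac{\eps^{2}\norm{x-y}^{2}}{4(1-\eps^{2})}$. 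Using $\norm{x-y}\le 2$ once more yields $t^{2}\le \tfrac{\eps^{2}}{1-\eps^{2}}$ and consequently $1-t^{2}\ge \tfrac{1-2\eps^{2}}{1-\eps^{2}}$.

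Plugging these two estimates into $\tfrac{4t^{2}}{1-t^{2}}$, the factors $1-\eps^{2}$ cancel and we get $\tfrac{4t^{2}}{1-t^{2}}\le \tfrac{\eps^{2}\norm{x-y}^{2}}{1-2\eps^{2}}$, so $\angle(\up{x},\up{y})\le \tfrac{\eps\norm{x-y}}{\sqrt{1-2\eps^{2}}}$. Since $m(\eps)=\tfrac{\sqrt{1+2\eps^{2}}}{\sqrt{1-2\eps^{2}}}\ge \tfrac{1}{\sqrt{1-2\eps^{2}}}$, this implies $\angle(\up{x},\up{y})\le m(\eps)\cdot\eps\norm{x-y}$, which is the desired upper bound. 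The main obstacle is controlling the denominator $1-t^{2}$ uniformly over all $x,y\in B_{d}$; the hypothesis $\eps\le 1/2$ is precisely what keeps $1-2\eps^{2}$ strictly positive, and the diameter bound $\norm{x-y}\le 2$ is what lets the cruder bound $t^{2}\le \eps^{2}/(1-\eps^{2})$ suffice.
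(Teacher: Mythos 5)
Your proof is correct and follows essentially the same route as the paper's, combining Lemma~\ref{lem:thetaBound}, Lemma~\ref{lem:bounde}, and Lemma~\ref{lma:invsine} in the same way. The only minor variation is that you invoke the intermediate estimate $e(\eps,x,y)\le\frac{\eps^2\norm{x-y}^2}{1-\eps^2}$ from the proof of Lemma~\ref{lem:bounde} rather than its final $\frac{4}{3}\eps^2\norm{x-y}^2$ bound, which makes the algebra telescope more cleanly and even gives the slightly stronger intermediate bound $\frac{\eps\norm{x-y}}{\sqrt{1-2\eps^2}}$ before relaxing to $m(\eps)\eps\norm{x-y}$.
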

\begin{proof}
	By Lemma~\ref{lem:bounde} and Lemma~\ref{lem:thetaBound}, we deduce that
	\begin{equation}\label{eqn:thetaBound}
	2\sin^{-1}\left(\frac{\eps}{2}\cdot \norm{x-y}\right)\leq \angle(\up{x},\up{y})\leq 2\sin^{-1}\left(\frac{\eps}{2} \sqrt{1+\frac{4\eps^2}{3}}\cdot \norm{x-y}\right).
	\end{equation}
	Recall that $ \eps\leq\frac{1}{2} $ and $ \norm{x-y}\leq 2 $, so the arguments $ \frac{\eps}{2}\cdot \norm{x-y}$ and $ \frac{\eps}{2} \sqrt{1+\frac{4\eps^2}{3}}\cdot \norm{x-y} $ from Inequality~(\ref{eqn:thetaBound}) are both in $[0,1) $. Thus, we use that $ \eps\leq\frac{1}{2} $ and $ \norm{x-y}\leq 2 $ together with Lemma~\ref{lma:invsine}, to deduce that
	\begin{align*}
	\eps\norm{x-y}&
	\leq \angle(\up{x},\up{y})\leq
	2\cdot \frac{\frac{\eps}{2} \sqrt{1+\frac{4\eps^2}{3}}\cdot \norm{x-y}}{\sqrt{1-\left(\eps\sqrt{1+\frac{4\eps^2}{3}}\right)^2}}\\
	&= \frac{\sqrt{1+\frac{4\eps^2}{3}}}{\sqrt{1-\eps^2\left(1+\frac{4\eps^2}{3}\right)}}\cdot\eps\norm{x-y}
	\leq
	\frac{\sqrt{1+2\eps^2}}{\sqrt{1-\eps^2\cdot 2}}\cdot\eps\norm{x-y}\\
	&=
	m(\eps)\cdot\eps\norm{x-y},
	\end{align*}
	where the last inequality follows since $1+ \frac{4\eps^2}{3}\leq  1+2\eps^2\leq 1+2\cdot \frac{1}{4}<2 $.
\end{proof}

Finally, we use Lemma~\ref{lma:angleEuclidDistConnection} to prove the following theorem, which is the main result of this section.

\begin{theorem}\label{thm:slisslshforavgeucldist}
	\SL is an $ (r,cr) $-\SLSH structure for the average euclidean distance $ ed_{avg} $.
\end{theorem}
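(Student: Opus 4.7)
The plan is to reduce correctness of \sls to correctness of the underlying average angular distance $(r',c'r')$-\SLSH structure that it queries internally, using Lemma~\ref{lma:angleEuclidDistConnection} as the bridge between $ed_{avg}$ on $B_d$ and $\angle_{avg}$ on $S_{d+1}$.

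First, I would verify that the choice $\eps=\frac{1}{2}\sqrt{1-\frac{2}{1+c^2}}$ makes the internal parameters valid. A direct substitution into $m(\eps)^2=\frac{1+2\eps^2}{1-2\eps^2}$ gives $m(\eps)^2=\frac{3c^2+1}{c^2+3}$, which is strictly larger than $1$ and strictly smaller than $c^2$ whenever $c>1$. Hence $c'=c/m(\eps)>1$, as required for an $(r',c'r')$-\SLSH structure for $\angle_{avg}$ to make sense, and simultaneously $\eps\in(0,\tfrac{1}{2})$, so Lemma~\ref{lma:angleEuclidDistConnection} applies to every pair of points hashed by \sls.

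Next, I would argue the completeness direction. Suppose there exists a data point $x$ with $ed_{avg}(Q,x)\leq r$. The upper bound in Lemma~\ref{lma:angleEuclidDistConnection} gives $\angle(\up{q},\up{x})\leq m(\eps)\,\eps\,\norm{q-x}$ for every $q\in Q$; averaging over $q\in Q$ yields
\[\angle_{avg}(Q',\up{x})\leq m(\eps)\,\eps\cdot ed_{avg}(Q,x)\leq m(\eps)\,\eps\, r=r'.\]
Therefore the internal $(r',c'r')$-\SLSH structure for $\angle_{avg}$ is guaranteed, with constant probability, to return some $\up{v}$ satisfying $\angle_{avg}(Q',\up{v})\leq c'r'=\eps\, c\, r$.

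Finally, I would pull this guarantee back to $ed_{avg}$. The lower bound in Lemma~\ref{lma:angleEuclidDistConnection} gives $\eps\,\norm{q-v}\leq \angle(\up{q},\up{v})$ for every $q\in Q$, and averaging over $q\in Q$ produces
\[\eps\cdot ed_{avg}(Q,v)\leq \angle_{avg}(Q',\up{v})\leq \eps\, c\, r,\]
hence $ed_{avg}(Q,v)\leq cr$, which is exactly the $(r,cr)$-\SLSH guarantee for $ed_{avg}$. The only nontrivial obstacle is the parameter computation showing that $\eps$ is tuned so that $c'>1$ while keeping $\eps\leq \tfrac{1}{2}$; after that, both directions of the reduction are clean averagings of the pointwise two-sided bound in Lemma~\ref{lma:angleEuclidDistConnection}, with the upper bound handling completeness and the lower bound handling soundness.
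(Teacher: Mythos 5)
Your proposal is correct and follows essentially the same route as the paper's proof: verify $c'=c/m(\eps)>1$ via the same computation $m(\eps)^2=\frac{3c^2+1}{c^2+3}$, then use the upper bound of Lemma~\ref{lma:angleEuclidDistConnection} for completeness and the lower bound for soundness. The only cosmetic difference is that you phrase the soundness step as a pullback on the returned point (the contrapositive of the paper's "$ed_{avg}(Q,x)>cr$ implies $\angle_{avg}(\up{Q},\up{x})>c'r'$"), which is logically equivalent.
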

\begin{proof}
	Consider a set-query $ Q$ of size $ k $ for the average euclidean $ (r,cr) $-\SLSH structure, and let $ Q'=\{\up{q} \mid q\in Q\} $ be the corresponding query for the average angular distance $(r',c'r')$-\SLSH structure. It suffices to prove that: 
	\begin{enumerate}
		\item $ c'>1 $,
		\item $\forall x~s.t.~ed_{avg}(Q,x)\leq r,~\angle_{avg}(\up{Q},\up{x})\leq r',$ and
		\item $\forall x~s.t.~ed_{avg}(Q,x)> cr,~\angle_{avg}(\up{Q},\up{x})> c'r'$.
	\end{enumerate}
	The proofs of these facts are as follows.
	\begin{enumerate}
		\item Observe that $ m(\eps)=\sqrt{\frac{1+2\cdot\frac{1}{4}\cdot \left(1-\frac{2}{1+c^2}\right)}{1-2\cdot\frac{1}{4}\cdot \left(1-\frac{2}{1+c^2}\right)}}=\sqrt{\frac{1\frac{1}{2}-\frac{1}{1+c^2}}{\frac{1}{2}+\frac{1}{1+c^2}}}=\sqrt{\frac{3(1+c^2)-2}{1+c^2+2}}=\sqrt{\frac{1+3c^2}{3+c^2}}< \sqrt{\frac{4c^2}{4}}=c$, where the first equality follows by the definition of $ m(\cdot) $ and since we have taken $ \eps=\frac{1}{2}\sqrt{1-\frac{2}{1+c^2}} $, and the inequality follows since $ c>1 $. Thus, $ c'= \frac{c}{m(\eps)}>1$.
		\item Assume that $ed_{avg}(Q,x)\leq r$. We prove that $\angle_{avg}(\up{Q},\up{x})\leq r'$. Indeed, by Lemma~\ref{lma:angleEuclidDistConnection}
		\begin{align*}
		\angle_{avg}(\up{Q},\up{x})&=\frac{1}{k}\sum_{q\in Q} \angle(\up{q},\up{x}) \leq
		\frac{1}{k}\sum_{q \in Q} m(\eps)\cdot \eps \norm{q-x}\\
		&= m(\eps)\cdot \eps \cdot ed_{avg}(Q,x)
		\leq m(\eps)\cdot \eps r = r'.
		\end{align*}
		\item Assume that $ed_{avg}(Q,x)>cr$. We prove that $ \angle_{avg}(\up{Q},\up{x})> c'r' $. Indeed, by Lemma~\ref{lma:angleEuclidDistConnection}
		\begin{align*}
		\angle_{avg}(\up{Q},\up{x})&=\frac{1}{k}\sum_{q \in Q} \angle(\up{q},\up{x}) \geq  \frac{1}{k}\sum_{q \in Q} \eps\norm{q-x}\\
		&=\eps\cdot ed_{avg}(Q,x)>
		\eps cr = c'r'.
		\end{align*}
	\end{enumerate}
\end{proof}

\section{Euclidean ellipsoid ALSH detailed presentation}\label{apndx:euclideanEllipsoidALSH}

In this section, we give a detailed presentation of the two reductions we use to solve the euclidean ellipsoid ALSH problem from Section~\ref{sec:euclideanEllipsoid}. Section~\ref{subsec:eucelltoangell} gives a reduction from the euclidean ellipsoid ALSH to the angular ellipsoid problem. Section~\ref{subsec:angellto1geomangsim} then reduces this problem to 
the weighted geometric angular similarity \SLSH problem, which is solved in Appendix~\ref{subsec:weightedgeom}. 
We note that this reduction requires that any query $q=(p,\{e_i\}_{i=1}^d)$ and data point $ x$ in the
angular ellipsoid structure
satisfy $ \angle(p,x)\leq \sqrt{\frac{c-1}{c} }\cdot \frac{\pi}{4}$.
As we will see, the inputs to the angular ellipsoid structure that we produce by the first reduction (i.e., from
the euclidean ellipsoid problem) will satisfy this requirement.

It is worth mentioning that the solution in Appendix~\ref{subsec:weightedgeom} requires that the weights are rational, hence we also require rational weights in both the ellipsoid structures.

\subsection{From euclidean ellipsoid ALSH to angular ellipsoid ALSH}\label{subsec:eucelltoangell}
In this section, we reduce the euclidean ellipsoid $(r,cr)$-\ALSH problem to an angular ellipsoid $(r',c'r')$-\ALSH problem.
To do this, we use the shrink-lift transformation $ \up{(\cdot)} $ from Section~\ref{sec:distanceSchemes}
with an appropriately tuned shrinking parameter $ \eps $, to map our data points from $B_d$ to $S_{d+1}$.
For our proofs of Lemma~\ref{lma:ellipsoidToAngularBound} and Theorem~\ref{thm:ellipsoidToAngular} to hold, we need that $ \eps\leq \frac{1}{8} $. Additionally, to prove that the parameter $ c' $ that we use for the angular ellipsoid $ (r',c'r') $ structure is larger than 1 (Theorem~\ref{thm:ellipsoidToAngular}), we need that $ \eps\leq \frac{\sqrt[8]{c}-1}{\sqrt[8]{c}+1} $ and $\eps\leq  \sqrt{\frac{(c-\sqrt{c})r}{5(\sqrt{c}+1)\cdot\sum_{i=1}^d w_i}} $. Finally, to ensure that $ \angle(p,x)\leq \sqrt{\frac{c-1}{c} }\cdot \frac{\pi}{4}$ for any query $q=(p,\{e_i\}_{i=1}^d)$ and data point $ x$ in the angular ellipsoid structure (see the proof of Theorem~\ref{thm:ellipsoidToAngular}), we need that $\eps\leq \sqrt{1-\frac{1}{\sqrt[4]{c}} }\cdot \frac{\pi}{8\sqrt{2}}$.
We therefore set $ \eps $ to be the minimum of all these upper bounds, that is $ \eps=\min\left(\frac{1}{8},\frac{\sqrt[8]{c}-1}{\sqrt[8]{c}+1}, \sqrt{\frac{(c-\sqrt{c})r}{5(\sqrt{c}+1)\cdot\sum_{i=1}^d w_i}},
\sqrt{1-\frac{1}{\sqrt[4]{c}} }\cdot \frac{\pi}{8\sqrt{2}}\right).$

We store the images (by the shrink-lift transformation) of our data points in the angular ellipsoid $(r',c'r')$-\ALSH structure.\footnote{We do not want to set $\eps$ to be too small since this is likely to deteriorate the performance of
	the angular ellipsoid structure on these images.}
We recall (Lemma~\ref{lma:angleEuclidDistConnection}) that for a sufficiently small $ \eps $
the angular distance between $\up{x}$ and $\up{y}$ is approximately equal to $\eps$ times the euclidean
distance between $x$ and $y$.
We set $
r' = \eps^2(1+\eps)^2\cdot \left(r+5\beta(\eps)\cdot\sum_{i=1}^d w_i\right), \text{ and }
c'=\frac{\eps^2(1-\eps)^2\cdot\left(cr-5\beta(\eps)\cdot \sum_{i=1}^d w_i\right)}{r'},
$
where $ \beta(\eps) =\frac{1-\sqrt{1-\eps^2}}{\sqrt{1-\eps^2}}\approx \frac{\eps^2}{2} \geq0$.
Our choice of $\eps$  guarantees that $ \beta (\eps)\cdot \sum_{i=1}^{d}w_i\ll r $
and thereby $ r' $ is approximately $ \eps^2\cdot r $, as we expect since the angular ellipsoid distance is
a sum of (weighted) squared angular distances each of which is smaller by a factor of $\eps$ from its corresponding euclidean
distance. Notice also that for our choice of $\eps$, $c'$ is approximately equal to $\sqrt[4]{c}$.\footnote{
	By using a smaller $\eps$ we can make $c'$ closer to $c$.}
The angular ellipsoid structure uses the same weights as of the euclidean ellipsoid structure.

\paragraph*{The query} Let $q_0= (p,\{e_i\}_{i=1}^d) $ be a euclidean ellipsoid query, where 
$ p\in B_d $ is a center of an ellipsoid and $\{e_i\}_{i=1}^d $ are the unit vectors of $ \reals^d $ in the directions of the ellipsoid axes.
We query the angular ellipsoid structure constructed in the preprocessing phase with the angular ellipsoid query $ q=(\up{p},\{\bar{e_i}\}_{i=1}^d) $, where each $ \bar{e_i} $ is obtained by rotating $ (e_i,0) $ in the direction of $(0,\ldots,0,1) $, until its angle with $ \up{p} $ becomes $ \frac{\pi}{2} $ (this is illustrated in Figure~\ref{fig:mysphere}).
Formally, we define $ \bar{e_i}:= (a_i\cdot e_i;\sqrt{1-a_i^2}) $ where $ a_i=-sign(p_i)\cdot \sqrt{\frac{1-\norm{\eps p}^2}{\eps^2 p_i^2 +1-\norm{\eps p}^2}}\in [-1,1] $, and
$ sign(x)=\begin{cases}
1  & \text{if } x \ge 0 \\
-1  & \text{if } x < 0
\end{cases}  $.
To simplify the expression above, we define $z(p,\eps):=  \sqrt{\eps^2 p_i^2 +1-\norm{\eps p}^2}$, so we get that
\begin{equation}\label{eq:adef}
a_i= -sign(p_i)\cdot\frac{\sqrt{1-\norm{\eps p}^2}}{z(p,\eps)},~ and~\sqrt{1-a_i^2}=\frac{ \eps |p_i|}{z(p,\eps)}=\frac{ \eps p_i\cdot sign(p_i)}{z(p,\eps)}.
\end{equation}
Note that this definition of $ \bar{e_i} $ makes $ \bar{e_i} $ orthogonal to $ \up{p} $. Indeed, 
\begin{align*}
\bar{e_i}^T\cdot  \up{p} &=a_i\cdot \eps p_i +\sqrt{1-a_i^2}\cdot \sqrt{1-\norm{\eps p}^2} \\&=-sign(p_i)\cdot\frac{\sqrt{1-\norm{\eps p}^2}}{z(p,\eps)} \cdot \eps p_i+\frac{ \eps p_i\cdot sign(p_i)}{z(p,\eps)}\cdot \sqrt{1-\norm{\eps p}^2}=0,
\end{align*}
where the first equality follows from the definition $ \up{x} =(\eps x_1,\ldots,\eps x_d,\sqrt{1-\norm{\eps x}^2})$.

The following Lemma implies the correctness of our reduction and the resulting structure, stated in Theorem \ref{thm:ellipsoidToAngular}.

\begin{lemma}\label{lma:ellipsoidToAngularBound}
	Let $ \eps\in (0,\frac{1}{2}),~ x,p\in B_d,$ and a euclidean ellipsoid query $q_0= (p,\{e_i\}_{i=1}^d) $, where $\{e_i\}_{i=1}^d$ is the standard basis in $ \reals^d $. Then for $  q=(\up{p},\{\bar{e_i}\}_{i=1}^d) $ as defined above we have that for every $i\in [d] $
	\[ \max \left(0,\eps(1-\eps)\cdot (|x_i-p_i|-\beta(\eps))\right)\leq \angle_i(q,\up{x}) \leq \eps(1+\eps)\cdot(|x_i-p_i|+\beta(\eps)),\]
	where
	$ \angle_i(q,x) $  is the angular distance between  $ x $ and its projection on the hyperplane orthogonal to $ e_i $ (Figure~\ref{fig:thetai}).
\end{lemma}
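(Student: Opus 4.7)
The plan is to work directly with $\sin\theta := |\bar{e_i}^T\up{x}|$, where $\theta := \angle_i(q,\up{x})\in[0,\pi/2)$. I would first invoke the symmetry $(p_i,x_i)\mapsto(-p_i,-x_i)$, under which both $|x_i-p_i|$ and $|\bar{e_i}^T\up{x}|$ are invariant (the defining formula for $a_i$ flips sign while the $i$-th coordinate of $\up{x}$ flips as well), to reduce to the case $p_i\ge 0$. Plugging~\eqref{eq:adef} into $\bar{e_i}^T\up{x} = a_i\eps x_i + \sqrt{1-a_i^2}\sqrt{1-\norm{\eps x}^2}$ gives, after a short simplification, $\bar{e_i}^T\up{x} = \eps(p_i c_x - x_i c_p)/z(p,\eps)$, where I write $c_p := \sqrt{1-\norm{\eps p}^2}$ and $c_x := \sqrt{1-\norm{\eps x}^2}$; all three of $c_p,c_x,z(p,\eps)$ lie in $[\sqrt{1-\eps^2},1]$. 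For the lower bound I would decompose $p_ic_x - x_ic_p = (p_i-x_i)c_p + p_i(c_x-c_p)$ and apply the reverse triangle inequality together with $|c_x-c_p|\le 1-\sqrt{1-\eps^2} = \beta(\eps)\sqrt{1-\eps^2}$, $c_p\ge\sqrt{1-\eps^2}$, $|p_i|\le 1$, and $z(p,\eps)\le 1$, obtaining $\sin\theta \ge \eps\sqrt{1-\eps^2}(|p_i-x_i|-\beta(\eps))$. Since $\theta\ge\sin\theta$ on $[0,\pi/2)$ and $\sqrt{1-\eps^2}\ge 1-\eps$, this yields the stated lower bound, which becomes trivial after taking the $\max$ with $0$ when $|p_i-x_i|<\beta(\eps)$.

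The upper bound is more delicate: $\sin\theta$ can approach $1$, so a direct application of Lemma~\ref{lma:invsine} is too weak. My plan is to exploit the 2D geometry built into~\eqref{eq:adef}. Let $\Pi := \mathrm{span}\{(e_i,0),\,(0,\ldots,0,1)\}\subset\reals^{d+1}$ and, expressed in $2$D coordinates inside $\Pi$, write $u_p := (\eps p_i,c_p)$ and $u_x := (\eps x_i,c_x)$ for the projections of $\up{p}$ and $\up{x}$ onto $\Pi$. Then $\bar{e_i}\in\Pi$, and the orthogonality $\bar{e_i}\perp\up{p}$ established earlier restricts to $\bar{e_i}\perp u_p$ in $\Pi$. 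Parameterising $u_p$, $u_x$ in polar coordinates in $\Pi$ by angles $\psi_p,\psi_x\in[0,\pi]$ from the $e_i$-axis, and noting that $\bar{e_i}$ is the $\pi/2$-rotation of $u_p/\norm{u_p}$ (and so sits at angle $\psi_p+\pi/2$), a short calculation yields $\sin\theta = \norm{u_x}\cdot|\sin(\psi_x-\psi_p)| \le |\sin(\psi_x-\psi_p)|$. Introducing $\phi_p := \arctan(\eps p_i/c_p)$ and $\phi_x := \arctan(\eps x_i/c_x)$, so that $\psi_\bullet = \pi/2 - \phi_\bullet$ and $|\phi_p|,|\phi_x|\le\arcsin(\eps)\le\pi/6$ for $\eps\le 1/2$, we obtain $|\psi_x-\psi_p|=|\phi_p-\phi_x|\le\pi/3$. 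Both $\theta$ and $|\phi_p-\phi_x|$ thus lie in $[0,\pi/2]$, and monotonicity of $\sin$ there gives $\theta\le|\phi_p-\phi_x|$.

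Finally, I would bound $|\phi_p-\phi_x|$ by splitting on the sign of $p_ix_i$. When $p_ix_i\ge 0$, we have $1+\tan\phi_p\tan\phi_x>0$, so the arctan-subtraction identity combined with $|\arctan(y)|\le|y|$ and the numerator estimate from the first paragraph yields $|\phi_p-\phi_x| \le \eps(|p_i-x_i|+\beta(\eps))/(1-\eps^2)$; when $p_ix_i<0$, the plain triangle inequality and $|\arctan(y)|\le|y|$ directly give $|\phi_p-\phi_x|\le\eps|p_i-x_i|/\sqrt{1-\eps^2}$. For $\eps\in[0,1/2]$, the elementary estimates $1/(1-\eps^2)\le 1+\eps$ and $1/\sqrt{1-\eps^2}\le 1+\eps$ then produce the claimed bound $\eps(1+\eps)(|p_i-x_i|+\beta(\eps))$ in both cases. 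The main obstacle is precisely this upper bound: because $\sin\theta$ can be close to $1$, a direct estimate via Lemma~\ref{lma:invsine} is insufficient, and the fix relies on recognising that $\bar{e_i}$, $u_p$, and $u_x$ all lie in the common 2-plane $\Pi$, thereby reducing the problem to a planar arctan-difference inequality.
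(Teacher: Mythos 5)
Your proposal is correct, and it takes a genuinely different route from the paper. The paper simplifies $\bar{e_i}^T\up{x}$ to the same expression you obtain, then bounds $\angle_i(q,\up{x})=\sin^{-1}(\cdot)$ from above directly via Lemma~\ref{lma:invsine}: it uses $\sin^{-1}(y)\le y/\sqrt{1-y^2}$ together with $y\le 3\eps$ (so $\sqrt{1-y^2}\ge\sqrt{1-9\eps^2}$), and then checks $1/\sqrt{1-9\eps^2}\le 1+\eps$ for $\eps\le 1/8$. This is exactly the step you identify as an obstacle: the chain breaks down once $\eps$ is not small (for $\eps$ near $1/2$, the bound $y\le 3\eps$ is vacuous and $\sqrt{1-9\eps^2}$ is not even defined), so the paper's argument really only establishes the lemma for $\eps\lesssim 1/8$, relying on the restriction $\eps\le 1/8$ imposed in Appendix~\ref{subsec:eucelltoangell}. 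Your reduction to the 2-plane $\Pi$, followed by the polar-angle identity $\sin\theta=\norm{u_x}\,|\sin(\psi_x-\psi_p)|$ and the arctan-difference bound, avoids inverting $\sin$ near its endpoint entirely and proves the statement for the full claimed range $\eps\in(0,1/2)$. The lower bound in your proof coincides with the paper's (both rest on $\sin^{-1}(y)\ge y$ and $z(p,\eps)\le 1$). Minor cosmetic point: in the case $p_ix_i\ge 0$ you actually get the slightly sharper factor $1/\sqrt{1-\eps^2}$ rather than $1/(1-\eps^2)$, but as you observe both are $\le 1+\eps$ on $[0,1/2]$, so the conclusion is unaffected. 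Overall your argument is a clean geometric alternative that also closes the gap between the lemma's hypotheses and what the paper's proof actually establishes.
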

\begin{proof}
	Since $ \up{x}=\left(\eps x;\sqrt{1-\norm{\eps x}^2}\right)\in S_{d+1} $, we get that
	\begin{align}
	\angle_i(q,\up{x})&=
	\sin^{-1}\left(\left| \bar{e_i}^T \cdot  \up{x}\right|\right)
	= \sin^{-1}\left(\left| a_i\cdot e_i^T\cdot \eps x+\sqrt{1-a_i^2}\cdot \sqrt{1-\norm{\eps x}^2} \right|\right)\nonumber\\
	&=\sin^{-1}\left(\left|-sign(p_i) \frac{\sqrt{1-\norm{\eps p}^2}}{z(p,\eps)}\cdot \eps x_i+\frac{ \eps p_i\cdot sign(p_i)}{z(p,\eps)}\cdot \sqrt{1-\norm{\eps x}^2} \right|\right)\nonumber\\
	&=\sin^{-1}\left(\frac{\eps}{z(p,\eps)}\left|-\sqrt{1-\norm{\eps p}^2}\cdot x_i+ p_i\cdot \sqrt{1-\norm{\eps x}^2} \right|\right)\nonumber\\
	&=\sin^{-1}\left(\frac{\eps\sqrt{1-\norm{\eps p}^2}}{z(p,\eps)}\left|x_i-\frac{\sqrt{1-\norm{\eps x}^2}}{\sqrt{1-\norm{\eps p}^2}}p_i  \right|\right)\nonumber\\
	&=\sin^{-1}\left(\frac{\eps\sqrt{1-\norm{\eps p}^2}}{z(p,\eps)}\left|x_i-p_i+\frac{\left(\sqrt{1- \norm{\eps p}^2}-\sqrt{1- \norm{\eps x}^2}\right)}{\sqrt{1-\norm{\eps p}^2}} p_i\right|\right)\label{eq:angleiqx},
	\end{align}
	where the second equality follows from the definitions of $ \bar{e_i} $ and $ \up{x} =(\eps x_1,\ldots,\eps x_d,\\\sqrt{1-\norm{\eps x}^2})$, the third equality follows from Equation~(\ref{eq:adef}), the fourth equality follows by the fact that $ |sign(\cdot)|=1 $, and the two last equalities follow since $|x|=|-x|$ and by adding and subtracting $ p_i $.
	
	\paragraph*{Right Inequality}
	Observe that the following holds,
	\begin{align}
	\left|x_i-p_i+\frac{\sqrt{1- \norm{\eps p}^2}-\sqrt{1- \norm{\eps x}^2}}{\sqrt{1-\norm{\eps p}^2}} p_i\right|&\leq
	|x_i-p_i|\nonumber\\&+\frac{\left|\sqrt{1- \norm{\eps p}^2}-\sqrt{1- \norm{\eps x}^2}\right|}{\sqrt{1-\norm{\eps p}^2}}\cdot |p_i|\nonumber\\
	&\leq|x_i-p_i|+ \frac{1-\sqrt{1-\eps^2}}{\sqrt{1-\eps^2}}\nonumber\\
	&= |x_i-p_i|+\beta(\eps)\label{eq:rightineqFirst},
	\end{align}
	where the first inequality follows from the triangle inequality, and the last inequality follows since $ \norm{\eps p}^2,\norm{\eps x}^2 \in \left[0 ,\eps^2\right]$ and $ |p_i| \leq 1$.
	Thus, by substituting Equation~(\ref{eq:rightineqFirst}) into Equation~(\ref{eq:angleiqx}) we get that	
	\begin{align*}
	\angle_i(q,\up{x})&\leq \sin^{-1}\left(\frac{\eps\sqrt{1-\norm{\eps p}^2}}{z(p,\eps)}\cdot (|x_i-p_i|+\beta(\eps))\right)\\
	&\leq \frac{|x_i-p_i|+\beta(\eps)}{\sqrt{1-9\eps^2}}\cdot\frac{\eps\sqrt{1-\norm{\eps p}^2}}{z(p,\eps)}\\
	&  \leq \frac{1}{\sqrt{1-9\eps^2}}\cdot\frac{\eps\sqrt{1-\norm{\eps p}^2}}{\sqrt{1-\norm{\eps p}^2}}\cdot (|x_i-p_i|+\beta(\eps))\\
	&=
	\frac{\eps}{\sqrt{1-9\eps^2}}\cdot (|x_i-p_i|+\beta(\eps))\\
	&\leq \eps(1+\eps)\cdot (|x_i-p_i|+\beta(\eps)),
	\end{align*}
	where the first inequality follows by Equation~(\ref{eq:angleiqx}) and since $ \sin^{-1}(x) $ is an increasing function for $x\in(-\frac{\pi}{2},\frac{\pi}{2})  $, the second inequality follows by using Lemma~\ref{lma:invsine} together with the fact that the argument of the $ \sin^{-1}(\cdot) $ is at most $ 3\eps$.\footnote{The argument of the $ \sin^{-1}(\cdot) $ is at most $ \frac{3\eps}{\sqrt{1-\eps^2}}$ since $ \beta(\eps)<1 $ for $ \eps\leq\frac{1}{8} $, and $ |x_i-p_i|\leq 2$ and $ z(p,\eps)=\sqrt{\eps^2 p_i^2 +1-\norm{\eps p}^2}\geq \sqrt{1-\norm{\eps p}^2}$.} The third inequality follows since $ z(p,\eps)\geq \sqrt{1-\norm{\eps p}^2} $, and the last inequality follows since $ \eps\leq\frac{1}{8} $.
	
	\paragraph*{Left Inequality}
	As in the proof of the right inequality, we get that
	\begin{align*}
	\left|x_i-p_i+\frac{\sqrt{1- \norm{\eps p}^2}-\sqrt{1- \norm{\eps x}^2}}{\sqrt{1-\norm{\eps p}^2}} p_i\right|&\geq
	|x_i-p_i|\\
	&-\frac{\left|\sqrt{1- \norm{\eps p}^2}-\sqrt{1- \norm{\eps x}^2}\right|}{\sqrt{1-\norm{\eps p}^2}}\cdot |p_i|\\
	&\geq|x_i-p_i|- \frac{1-\sqrt{1-\eps^2}}{\sqrt{1-\eps^2}}\\
	&=|x_i-p_i|-\beta(\eps).
	\end{align*}
	Thus, since $ \sin^{-1}(x) $ is an increasing function for $x\in(-\frac{\pi}{2},\frac{\pi}{2})  $, we conclude that the following holds, $ \angle_i(q,\up{x})\geq \sin^{-1}\left(\frac{\eps\sqrt{1-\norm{\eps p}^2}}{z(p,\eps)}\cdot (|x_i-p_i|-\beta(\eps))\right) $.
	If $ |x_i-p_i|< \beta(\eps) $, then the left inequality holds since we always have that $ \angle_i(q,\up{x})\geq 0$. Otherwise, assume that $ |x_i-p_i|\geq \beta(\eps) $, so
	\begin{align*}
	\angle_i(q,\up{x})&\geq \sin^{-1}\left(\frac{\eps\sqrt{1-\norm{\eps p}^2}}{z(p,\eps)}\cdot (|x_i-p_i|-\beta(\eps))\right)\\
	&\geq \frac{\eps\sqrt{1-\norm{\eps p}^2}}{z(p,\eps)}\cdot (|x_i-p_i|-\beta(\eps))\\
	&\geq \eps\sqrt{1-\eps^2}\cdot (|x_i-p_i|-\beta(\eps))
	\geq \eps(1-\eps)\cdot (|x_i-p_i|-\beta(\eps)),
	\end{align*}
	where the second inequality follows by Lemma~\ref{lma:invsine}, which we can apply since the argument of $ \sin^{-1}(\cdot) $ is in $ [0,1) $, and the third inequality follows since $ z(p,\eps)\leq 1 $.
\end{proof}

In Section~\ref{subsec:angellto1geomangsim}, we show the existence of an angular ellipsoid $(r',c'r')$-\ALSH structure, so we conclude the following theorem.
\begin{theorem}\label{thm:ellipsoidToAngular}
	The structure above is an $ (r,cr) $-\ALSH structure for the euclidean ellipsoid distance $ d_{\circ} $.
\end{theorem}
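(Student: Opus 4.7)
The plan is to verify the four ingredients that make this an $(r,cr)$-\ALSH structure for $d_{\circ}$: (a) completeness --- if a data point $x$ satisfies $d_{\circ}(q_0,x)\le r$, then the transformed pair $(q,\up{x})$ satisfies $\EAngdst{q}{\up{x}}\le r'$; (b) soundness --- if $d_{\circ}(q_0,x)>cr$, then $\EAngdst{q}{\up{x}}>c'r'$; (c) $c'>1$; and (d) the angular-ellipsoid precondition $\angle(\up{p},\up{x})\le \sqrt{(c-1)/c}\cdot \pi/4$ is met on the transformed inputs, so we are entitled to invoke the angular ellipsoid $(r',c'r')$-\ALSH structure of Section~\ref{subsec:angellto1geomangsim}.

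First I would reduce to the case where $\{e_i\}_{i=1}^d$ is the standard basis of $\reals^d$. This is without loss of generality: rotating $\reals^d$ so that $\{e_i\}$ aligns with the standard basis preserves $d_{\circ}$ and, after extending the rotation trivially to the $(d{+}1)$-st coordinate, commutes with the shrink--lift transformation and with the construction of the $\bar{e_i}$'s. In this setting Lemma~\ref{lma:ellipsoidToAngularBound} applies directly and gives, for each coordinate $i$,
\[
\max\bigl(0,\eps(1-\eps)(|x_i-p_i|-\beta(\eps))\bigr)\le \angle_i(q,\up{x})\le \eps(1+\eps)(|x_i-p_i|+\beta(\eps)).
\]

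Next I would square, multiply by $w_i$, and sum. For the upper bound, expand $(|x_i-p_i|+\beta(\eps))^2=(x_i-p_i)^2+2\beta(\eps)|x_i-p_i|+\beta(\eps)^2$ and use $|x_i-p_i|\le 2$ and $\beta(\eps)\le 1$ (valid for $\eps\le 1/8$) to bound this by $(x_i-p_i)^2+5\beta(\eps)$. Summing against $w_i$ yields
\[
\EAngdst{q}{\up{x}}\le \eps^2(1+\eps)^2\Bigl(d_{\circ}(q_0,x)+5\beta(\eps)\sum_{i=1}^d w_i\Bigr),
\]
so $d_{\circ}(q_0,x)\le r$ gives $\EAngdst{q}{\up{x}}\le r'$, proving (a). For the lower bound, when $|x_i-p_i|\ge \beta(\eps)$, expand $(|x_i-p_i|-\beta(\eps))^2\ge (x_i-p_i)^2-2\beta(\eps)|x_i-p_i|\ge (x_i-p_i)^2-4\beta(\eps)$; when $|x_i-p_i|<\beta(\eps)$ the RHS is dominated by this nonnegative trivial bound, so the same per-coordinate inequality holds. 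Summing gives $\EAngdst{q}{\up{x}}\ge \eps^2(1-\eps)^2(d_{\circ}(q_0,x)-5\beta(\eps)\sum_i w_i)$, which on the regime $d_{\circ}(q_0,x)>cr$ yields $\EAngdst{q}{\up{x}}>c'r'$, proving (b).

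Finally I would use the chosen $\eps$ to discharge (c) and (d). For (c), plug into the definitions of $r'$ and $c'$: the inequality $c'>1$ is equivalent to $(1-\eps)^2(cr-5\beta(\eps)\sum_i w_i)>(1+\eps)^2(r+5\beta(\eps)\sum_i w_i)$. The bound $\eps\le(\sqrt[8]{c}-1)/(\sqrt[8]{c}+1)$ gives $(1-\eps)^2/(1+\eps)^2\ge 1/\sqrt[4]{c}$, so the leading terms satisfy $(1-\eps)^2 cr\ge (1+\eps)^2 c^{3/4}r$, providing a strictly positive gap $(1+\eps)^2(c^{3/4}-1)r$. The bound $\eps^2\le (c-\sqrt{c})r/\bigl(5(\sqrt{c}+1)\sum_i w_i\bigr)$ together with $\beta(\eps)\le \eps^2/(1-\eps^2)\lesssim \eps^2$ forces $5\beta(\eps)\sum_i w_i$ to be small compared with this gap, closing the inequality. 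For (d), Lemma~\ref{lma:angleEuclidDistConnection} gives $\angle(\up{p},\up{x})\le m(\eps)\eps\norm{p-x}\le 2m(\eps)\eps$, and the bound $\eps\le \sqrt{1-1/\sqrt[4]{c}}\cdot \pi/(8\sqrt{2})$ is calibrated so that this is $\le \sqrt{(c'-1)/c'}\cdot \pi/4$, meeting the precondition of the angular ellipsoid structure.

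The main obstacle is the simultaneous bookkeeping of all four constraints on $\eps$: the bound needed to apply Lemma~\ref{lma:ellipsoidToAngularBound} ($\eps\le 1/2$, with $\eps\le 1/8$ used for $\beta(\eps)\le 1$), the two bounds needed to force $c'>1$ through the $(1\pm\eps)^2$ factors and the additive slack $5\beta(\eps)\sum w_i$, and the angular-diameter bound for the downstream structure. The core technical ideas --- coordinate-wise linearisation by Lemma~\ref{lma:ellipsoidToAngularBound}, converting the linear slacks $\beta(\eps)$ into an additive weighted slack after squaring, and tuning $\eps$ so that $c'\approx \sqrt[4]{c}>1$ --- are straightforward; it is purely the tightness of the quantitative choices that requires care.
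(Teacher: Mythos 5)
Your four-part decomposition (completeness, soundness, $c'>1$, angular-diameter precondition) mirrors the paper's proof exactly, and your completeness and soundness steps are handled with the same tool (Lemma~\ref{lma:ellipsoidToAngularBound}) and the same algebra; in fact your uniform per-coordinate lower bound $\angle_i(q,\up{x})^2\ge\eps^2(1-\eps)^2\bigl((x_i-p_i)^2-4\beta(\eps)\bigr)$ (valid for all $i$ because the right side is negative when $|x_i-p_i|<\beta(\eps)$) is a slightly cleaner way to sum than the paper's explicit restriction to $I=\{i:|x_i-p_i|\ge\beta(\eps)\}$, and it still lands below $c'r'$ because $4\beta\le 5\beta$.

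The one place where your route genuinely diverges, and where it leaves a gap, is claim (c) together with its use in (d). You argue $c'>1$ \emph{additively}: you bound the leading-term gap $(1-\eps)^2cr-(1+\eps)^2r\ge(1+\eps)^2(c^{3/4}-1)r$ and then absorb the slack $5\beta(\eps)\sum_i w_i$ into that gap. This does prove $c'>1$ (after an extra algebraic check that $\tfrac{2(c-\sqrt{c})}{\sqrt{c}+1}<c^{3/4}-1$, which you do not carry out), but it does \emph{not} deliver the sharper lower bound $c'\ge\sqrt[4]{c}$. Yet your step (d) needs exactly that: the chain $2m(\eps)\eps\le\sqrt{1-1/\sqrt[4]{c}}\cdot\pi/4\le\sqrt{1-1/c'}\cdot\pi/4$ only closes if $c'\ge\sqrt[4]{c}$ (so that $1/c'\le 1/\sqrt[4]{c}$), and $c'>1$ alone is not enough since $1-1/c'$ could be arbitrarily small. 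The paper avoids this by proving (c) \emph{multiplicatively}: the constraint $\eps\le\frac{\sqrt[8]{c}-1}{\sqrt[8]{c}+1}$ gives $\bigl(\frac{1-\eps}{1+\eps}\bigr)^2\ge c^{-1/4}$ and the constraint $\eps\le\sqrt{\frac{(c-\sqrt{c})r}{5(\sqrt{c}+1)\sum_i w_i}}$ gives $\frac{cr-5\eps^2\sum w_i}{r+5\eps^2\sum w_i}\ge\sqrt{c}$, so their product yields $c'>\sqrt[4]{c}$ directly, which is then invoked in (d). You already have both of the paper's ingredients implicitly (you use the same two $\eps$-constraints), so the fix is simply to present (c) as the product of the two factor bounds rather than as a leading-term-versus-slack estimate, thereby recording the stronger conclusion $c'>\sqrt[4]{c}$ that (d) actually consumes.
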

\begin{proof}
	Consider a query $ q_0= (p, \{e_i\}_{i=1}^d)$ for the euclidean ellipsoid ALSH structure, and let $ q=(\up{p},\{\bar{e_i}\}_{i=1}^d)$ be the corresponding query for the angular ellipsoid $(r',c'r')$-\ALSH structure. We assume w.l.o.g. that $\{e_i\}_{i=1}^d$ is the standard basis of $ \reals^d $.
	It suffices to prove that
	\begin{enumerate}
		\item $c'>1,$
		\item $\forall p,x\in B_d,~ \angle(\up{p},\up{x})\leq \sqrt{\frac{c'-1}{c'} }\cdot \frac{\pi}{4}, $
		\item $\forall x~s.t.~\Edst{q_0}{x}\leq r,~\EAngdst{q}{\up{x}}\leq r',$ and
		\item $\forall x~s.t.~\Edst{q_0}{x}> cr,~\EAngdst{q}{\up{x}}> c'r'$.
		
	\end{enumerate}
	The proofs of these claims are as follows.
	\begin{enumerate}
		\item Observe that since $ \eps\in (0,\frac{1}{2}) $, we have that $ \beta(\eps)=\frac{1-\sqrt{1-\eps^2}}{\sqrt{1-\eps^2}}=\frac{1}{\sqrt{1-\eps^2}}-1<1+\eps^2-1=\eps^2$, so using the definitions of $ c' $ and $ r' $ we get that
		\begin{equation}
		c'=\frac{(1-\eps)^2}{(1+\eps)^2}\cdot \frac{cr-5\beta(\eps)\cdot \sum_{i=1}^d w_i}{r+5\beta(\eps)\cdot\sum_{i=1}^d w_i}
		>\left(\frac{1-\eps}{1+\eps}\right)^2\cdot \frac{cr-5\eps^2\cdot \sum_{i=1}^d w_i}{r+5\eps^2\cdot\sum_{i=1}^d w_i}\geq \frac{1}{\sqrt[4]{c}}\cdot \sqrt{c}=\sqrt[4]{c},\label{ineq:ctaglowerbound}
		\end{equation}
		where the second inequality follows since our choice of $ \eps\leq \frac{\sqrt[8]{c}-1}{\sqrt[8]{c}+1} $ implies that $ \left(\frac{1-\eps}{1+\eps}\right)^2\geq \frac{1}{\sqrt[4]{c}} $, and since our choice of $\eps\leq  \sqrt{\frac{(c-\sqrt{c})r}{5(\sqrt{c}+1)\cdot\sum_{i=1}^d w_i}} $ implies that $  \frac{cr-5\eps^2\cdot \sum_{i=1}^d w_i}{r+5\eps^2\cdot\sum_{i=1}^d w_i}\geq \sqrt{c} $. Since $ c>1 $, Inequality~(\ref{ineq:ctaglowerbound}) implies claim 1.
		
		\item Let $ p,x\in B_d $, and recall from the \SL paragraph that $ m(x)=\frac{\sqrt{1+2x^2}}{\sqrt{1-2x^2}}= \sqrt{1+\frac{4\eps^2}{1-2\eps^2}}$. By Lemma~\ref{lma:angleEuclidDistConnection}, we have that
		\begin{align*}
		\angle(\up{p},\up{x})&\leq m(\eps)\cdot\eps\norm{p-x}\leq
		m(\eps)\cdot 2\eps
		=2\eps\sqrt{1+\frac{4\eps^2}{1-2\eps^2}}\\
		&< 2\eps\sqrt{1+\frac{4\cdot {\frac{1}{4}}^2}{1-2\cdot {\frac{1}{4}}^2}}
		=2\eps\sqrt{1+\frac{8}{4\cdot 7}}\leq 2\sqrt{2}\eps\leq
		\sqrt{1-\frac{1}{\sqrt[4]{c}} }\cdot \frac{\pi}{4}\\
		&\leq
		\sqrt{1-\frac{1}{c'} }\cdot \frac{\pi}{4} =\sqrt{\frac{c'-1}{c'} }\cdot \frac{\pi}{4},
		\end{align*}
		where the second inequality follows since $ p,x\in B_{d} $, the third inequality follows since $f(z)=\frac{4z^2}{1-2z^2} $ is increasing for $ z\in(-\infty,\frac{1}{2}) $ and $ \eps<\frac{1}{4}$, the fifth inequality follows since $\eps\leq \sqrt{1-\frac{1}{\sqrt[4]{c}} }\cdot \frac{\pi}{8\sqrt{2}}$, and the last inequality follows from Inequality~(\ref{ineq:ctaglowerbound}).
		
		\item Assume that $\Edst{q_0}{x}\leq r$.
		We prove that $ \EAngdst{q}{\up{x}}\leq r' $. Indeed, by Lemma~\ref{lma:ellipsoidToAngularBound},
		\begin{align*}
		\EAngdst{q}{\up{x}}&= \sum_{i=1}^d w_i \cdot \angle_i(q,\up{x})^2\leq \sum_{i=1}^d w_i \cdot \eps^2(1+\eps)^2\left(|x_i-p_i|+\beta(\eps)\right)^2\\
		&=\eps^2(1+\eps)^2\cdot \sum_{i=1}^d w_i \cdot\left(|x_i-p_i|+\beta(\eps)\right)^2\\
		&\leq \eps^2(1+\eps)^2\cdot \sum_{i=1}^d w_i \cdot\left(|x_i-p_i|^2+\beta^2(\eps)+4\beta(\eps)\right)\\
		&= \eps^2(1+\eps)^2\cdot\sum_{i=1}^d w_i |x_i-p_i|^2\\
		&+\eps^2(1+\eps)^2\left(\sum_{i=1}^d w_i\right) \cdot\left(\beta^2(\eps)+4\beta(\eps)\right) \\
		&\leq \eps^2(1+\eps)^2r+\eps^2(1+\eps)^2\left(\sum_{i=1}^d w_i\right)\cdot\left(\beta^2(\eps)+4\beta(\eps)\right)\\
		&\leq \eps^2(1+\eps)^2r+5\eps^2(1+\eps)^2\left(\sum_{i=1}^d w_i\right)\beta(\eps)=r'.
		\end{align*}
		The second inequality follows since $ |x_i-p_i|\leq 2 $, the third inequality follows since $ \left(e_i^T(x-p)\right)^2= |x_i-p_i|^2$ and since we assumed that $\Edst{q_0}{x}\leq r$, and the last inequality follows since $ \beta(\eps)<1 $ for $ \eps\in [0,\frac{1}{8}] $.
		
		\item Assume that $\Edst{q_0}{x} > cr$. We prove that $\EAngdst{q}{x}> c'r'$. Indeed, denote $ I= \{i\mid |x_i-p_i|\geq \beta(\eps)\}\subseteq [n]$, and note that
		\begin{align*}
		cr&< \Edst{q_0}{x}=\sum_{i\in I} w_i \cdot|x_i-p_i|^2+\sum_{i\in [d]\setminus I} w_i \cdot|x_i-p_i|^2\\
		&\leq \sum_{i\in I} w_i \cdot|x_i-p_i|^2+\beta(\eps)^2 \cdot \sum_{i=1}^d w_i,
		\end{align*}
		
		and therefore
		\begin{equation}\label{eq:elpsdatleastcrminussomething}
		\sum_{i\in I} w_i \cdot|x_i-p_i|^2> cr- \beta(\eps)^2 \cdot \sum_{i=1}^d w_i.
		\end{equation}
		
		We define $ \phi =  \sum_{i\in I} w_i \cdot\left(|x_i-p_i|-\beta(\eps)\right)^2$, and observe that
		\begin{align}
		\phi&\geq \sum_{i\in I} w_i \cdot\left(|x_i-p_i|^2+\underbrace{\beta^2(\eps)}_{\geq 0}-4\beta(\eps)\right)
		\nonumber\\
		&
		\geq \sum_{i\in I} w_i \cdot|x_i-p_i|^2- 4\left(\sum_{i=1}^d w_i\right)\cdot\beta(\eps)\nonumber\\
		&\stackrel{(\ref{eq:elpsdatleastcrminussomething})}{>} cr-\beta(\eps)^2\sum_{i=1}^d w_i-4\beta(\eps)\sum_{i=1}^d w_i
		=cr-(4\beta(\eps)+\beta(\eps)^2) \sum_{i=1}^d w_i\nonumber\\
		&\geq cr-5\beta(\eps)\cdot \sum_{i=1}^d w_i.\label{eq:philowerbound}
		\end{align}
		The first inequality follows since $ |x_i-p_i|\leq 2 $, and the last inequality follows since $ \beta(\eps)<1 $ for $ \eps\in [0,\frac{1}{8}] $.
		
		Thus, by Lemma~\ref{lma:ellipsoidToAngularBound} we conclude that
		\begin{align*}
		\EAngdst{q}{\up{x}}&= \sum_{i=1}^d w_i \cdot \angle_i(q,\up{x})^2
		\geq \sum_{i\in I} w_i \cdot \angle_i(q,\up{x})^2\\
		&\geq \sum_{i\in I} w_i \cdot \left(\max \left(0,\eps(1-\eps)\cdot (|x_i-p_i|-\beta(\eps))\right)\right)^2\\
		& = \sum_{i\in I} w_i \cdot \eps^2(1-\eps)^2\cdot\left(|x_i-p_i|-\beta(\eps)\right)^2\\
		&=\eps^2(1-\eps)^2 \cdot \phi
		\stackrel{(\ref{eq:philowerbound})}{>}\eps^2(1-\eps)^2\cdot\left(cr-5\beta(\eps)\cdot \sum_{i=1}^d w_i\right)\geq c'r',
		\end{align*}
		where the second equality follows by the definition of $ I $, and the third equality follows by the definition of $ \phi $.
	\end{enumerate}
\end{proof}

Our reduction guarantees that any query $ q=(\up{p},\{\bar{e_i}\}_{i=1}^d) $ for the
angular ellipsoid structure and any data point $ \up{x}$ stored in it,
satisfy $ \angle(p,x)\leq \sqrt{\frac{c'-1}{c'} }\cdot \frac{\pi}{4}$ as required.

\subsection{From angular ellipsoid ALSH to weighted geometric angular similarity SLSH }\label{subsec:angellto1geomangsim}

In this section, we reduce the angular ellipsoid $(r,cr)$-\ALSH problem that we have studied in Section~\ref{subsec:eucelltoangell}, to a weighted geometric angular similarity $ (r',c'r') $-\SLSH problem.

\subsubsection{H-hash - the LSH scheme of Jain et al.}
Our data structure is based on the H-hash of Jain et al.~\cite{jainhashing}.
The H-hash stores points which reside on $ S_{d+1} $ such that
for a query hyperplane $h$ through the origin, we can efficiently
retrieve the data points that have a small angular distance with their projection on $h$.

H-hash in fact uses an SLSH family for the s2p geometric angular similarity for sets of size $2$. That is,
a hash function is defined by two random directions $u$ and $v$. We hash a
point $x$  to the concatenation of $sign(x^T u)$ and $sign(x^T v) $ and we represent a query
hyperplane $h$, perpendicular to  $ e $,
by the set  $ \{e,-e\} $, which is hashed to the concatenation of
$sign(e^T u)$ and $sign((-e)^T v)$.

The probability that a data point $x$ collides with the hyperplane  $h$ perpendicular to $e$ is
equal to $\angle sim (x,e)\cdot  \angle sim (x,-e) = (1-\angle(x,e)/\pi)(1-\angle(x,-e)/\pi)$.
This collision probability increases with the angle between $x$ and its projection on $h$, and attains its maximum when $x$ is on $h$.

Recall that the
angular ellipsoid distance between a query $ q=(p,\{e_i\}_{i=1}^d) $
and a point $x$ is a weighted sum of the terms $ (\angle_i(q,x))^2 $.
Therefore,  if we hash the hyperplane orthogonal to $ e_i $ with H-hash, it will collide with higher probability with data points $ x $ with a smaller $ \angle_i(q,x) $. This suggests that we can answer an angular ellipsoid query $q= (p,\{e_i\}_{i=1}^d) $
by  a weighted geometric angular similarity  SLSH  set-query where the set is the union of the sets $ \{e_i,-e_i\} $ for all $i\in [d] $, using an appropriate weight $ w_i $ for each $ i\in [d] $.
Specifically, given the parameters $ r>0$ and $c>1 $, we store the data points in an $ (S',c'S') $-\SLSH structure for the weighted geometric angular s2p similarity for queries of size $ k'=2d $ and with the weights $\{w_1,w_1,w_2,w_2,\ldots,w_d,w_d\} $.\footnote{Such a structure is given in Appendix~\ref{subsec:weightedgeom}.}
We define $ c' $ and $ S' $ as follows
{\[S' =e^{\sum_{i=1}^{d}w_i\cdot \ln\left(\frac{1}{4}\right) -\frac{ 4r}{\pi^2-4\psi_c^2}},\text{ and }c' = \frac{e^{\sum_{i=1}^{d}w_i\ln\left(\frac{1}{4}\right)-\frac{4cr}{\pi^2} }}{S'}= e^{-4r\left(\frac{c}{\pi^2}-\frac{1}{\pi^2-4\psi_c^2}\right)},\]}
where we define $ \psi_c =\sqrt{\frac{c-1}{c} }\cdot \frac{\pi}{4}$. \footnote{We will prove that $ c'<1 $.}
To answer an angular ellipsoid query $q= (p,\{e_i\}_{i=1}^d) $, we query our structure with the set-query $Q=\{e_1,-e_1,e_2,-e_2,\ldots,\\e_d,-e_d\} $. For the reduction to succeed, we require that any query $q= (p,\{e_i\}_{i=1}^d) $ and data point $ x $ satisfy $ \angle(p,x) \leq\sqrt{\frac{c-1}{c} }\cdot \frac{\pi}{4} $.

Correctness of our structure follows from the following two theorems.
\begin{theorem}\label{thm:probAntipodalPair}
	Let $x\in S_{d+1} $ and $q=(p,\{e_i\}_{i=1}^d)$ be an angular ellipsoid query. Then, $\angle sim_{geo}(\left\{e_i,-e_{i}\right\},x)= \frac{1}{4}-\frac{\angle_i(q,x)^2}{\pi^2}$ for all $i\in [d] $.
\end{theorem}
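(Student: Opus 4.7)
The plan is to directly unfold the definition of the weighted geometric angular similarity on the antipodal pair $\{e_i,-e_i\}$ and reduce it to an expression purely in terms of $\angle_i(q,x)$ using a simple trigonometric identity. Essentially, this is the formalization of the observation already sketched in the body preceding Section~\ref{subsec:angellto1geomangsim}.

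First I would expand the left-hand side by the definition of the geometric similarity of a p2p similarity, applied to the two-element set $\{e_i,-e_i\}$:
\[
\angle sim_{geo}(\{e_i,-e_i\},x) \;=\; \angle sim(e_i,x)\cdot \angle sim(-e_i,x) \;=\; \left(1-\tfrac{\angle(x,e_i)}{\pi}\right)\left(1-\tfrac{\angle(x,-e_i)}{\pi}\right).
\]
Next, I would use the fact that $e_i$ and $-e_i$ are antipodal unit vectors, so $\angle(x,e_i)+\angle(x,-e_i)=\pi$. Setting $\theta_i := \angle_i(q,x)\in[0,\pi/2)$, the definition $\angle_i(q,x)=\sin^{-1}(|e_i^T x|)$ together with the antipodal relation gives
\[
\min\!\bigl(\angle(x,e_i),\angle(x,-e_i)\bigr) \;=\; \tfrac{\pi}{2}-\theta_i, \qquad \max\!\bigl(\angle(x,e_i),\angle(x,-e_i)\bigr) \;=\; \tfrac{\pi}{2}+\theta_i,
\]
since the angle between $x$ and the hyperplane orthogonal to $e_i$ (through the origin) is the complement of the smaller of the two angles $\angle(x,\pm e_i)$. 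Substituting these into the product and applying the difference-of-squares identity yields
\[
\left(1-\tfrac{\pi/2-\theta_i}{\pi}\right)\left(1-\tfrac{\pi/2+\theta_i}{\pi}\right)
\;=\;\left(\tfrac{1}{2}+\tfrac{\theta_i}{\pi}\right)\left(\tfrac{1}{2}-\tfrac{\theta_i}{\pi}\right)
\;=\;\tfrac{1}{4}-\tfrac{\theta_i^2}{\pi^2},
\]
which is exactly $\tfrac{1}{4}-\tfrac{\angle_i(q,x)^2}{\pi^2}$, as desired.

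There is no real obstacle here — the statement is an elementary identity. The only point that merits careful verification is the identification of $\min(\angle(x,e_i),\angle(x,-e_i))$ with $\pi/2-\angle_i(q,x)$, which follows directly from the geometric picture in Figure~\ref{fig:thetai} and the formula $\angle_i(q,x)=\sin^{-1}(|e_i^T x|)$ recorded just before the theorem: since $e_i\perp p$ and the hyperplane orthogonal to $e_i$ passes through the origin, the angle of $x$ to that hyperplane is $\pi/2$ minus the acute angle between $x$ and the line spanned by $e_i$, which in turn equals whichever of $\angle(x,e_i)$ or $\angle(x,-e_i)$ is at most $\pi/2$.
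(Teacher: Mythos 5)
Your proof is correct and follows essentially the same route as the paper's: expand the geometric similarity over $\{e_i,-e_i\}$, use $\angle(x,e_i)+\angle(x,-e_i)=\pi$ to express the two angles as $\tfrac{\pi}{2}\mp\angle_i(q,x)$, and simplify by difference of squares. The one subtle point you correctly flag — identifying $\min(\angle(x,e_i),\angle(x,-e_i))$ with $\tfrac{\pi}{2}-\angle_i(q,x)$ — is exactly the observation the paper uses.
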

\begin{proof}
	Let $x\in S_{d+1},~ i\in [d] $ and an angular ellipsoid query $q=(p,\{e_i\}_{i=1}^d)$. Recall that $ \angle_i(q,x) $ is defined as the angle between $ x $ and its projection on the hyperplane perpendicular to $ e_i $ passing through the origin, i.e., $\angle_i(q,x)= \frac{\pi}{2}-\min\left(\angle(x,e_i),\angle(x,-e_i)\right) $.
	Rearranging, we get that $ \min\left(\angle(x,e_i),\angle(x,-e_i)\right) =  \frac{\pi}{2}-\angle_i(q,x)$ , and since $ \angle(x,e_i)+\angle(x,-e_i)=\pi $, we get that $\max\left(\angle(x,e_i),\angle(x,-e_i)\right) =  \frac{\pi}{2}+\angle_i(q,x) $. So
	we get that the set of angles $ \{\angle(x,e_i),\angle(x,-e_i)\} $ is equal to $\{ \frac{\pi}{2}-\angle_i(q,x),\frac{\pi}{2}+\angle_i(q,x)\}  $ (see Figure~\ref{fig:thetai}). Hence, from the definition of the geometric angular similarity we get 
	\begin{align*} 
	\angle sim_{geo}(\left\{e_i,-e_{i}\right\},x)
	&=\left(1-\frac{\angle(x,e_i)}{\pi}\right)\cdot \left(1-\frac{\angle(x,-e_i)}{\pi}\right)\\
	&= \left(1-\frac{\frac{\pi}{2}-\angle_i(q,x)}{\pi}\right)\cdot \left(1-\frac{\frac{\pi}{2}+\angle_i(q,x)}{\pi}\right)\\
	&=\frac{1}{4}-\frac{\angle_i(q,x)^2}{\pi^2}.
	\end{align*}
\end{proof}

\begin{theorem}\label{thm:angellto1geomangsim}
	The structure above is an $ (r,cr) $-\ALSH structure for the angular ellipsoid distance $ \EAngd $.
\end{theorem}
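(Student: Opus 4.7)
The plan is to verify the three conditions that make the constructed weighted geometric angular similarity SLSH structure a valid $(r,cr)$-ALSH structure for $d_{\angle\circ}$: (i) $c'<1$ so that the target SLSH problem is well-posed; (ii) if $d_{\angle\circ}(q,v) \le r$ then $\angle sim_{wgeo}(Q,v) \ge S'$; and (iii) if $d_{\angle\circ}(q,v) \ge cr$ then $\angle sim_{wgeo}(Q,v) \le c'S'$. Once (i)--(iii) are established, the correctness of the underlying weighted geometric angular similarity $(S',c'S')$-SLSH structure translates directly into the ALSH guarantee.

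The core calculation is to take logs. Applying Theorem~\ref{thm:probAntipodalPair} to each antipodal pair $\{e_i,-e_i\}$ weighted by $w_i$, the set-query similarity factors as
\[
\angle sim_{wgeo}(Q,v) = \prod_{i=1}^d \left(\tfrac{1}{4} - \tfrac{\angle_i(q,v)^2}{\pi^2}\right)^{w_i} = \left(\tfrac{1}{4}\right)^{\sum_i w_i}\prod_{i=1}^d \left(1 - \tfrac{4\angle_i(q,v)^2}{\pi^2}\right)^{w_i},
\]
so that $\ln \angle sim_{wgeo}(Q,v) = \sum_i w_i \ln(1/4) + \sum_i w_i \ln\bigl(1 - 4\angle_i(q,v)^2/\pi^2\bigr)$. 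I would then invoke the sandwich $-z/(1-a) \le \ln(1-z) \le -z$, valid for $z\in[0,a]$ with $a<1$. The upper bound $\ln(1-z)\le -z$ immediately yields (iii): when $d_{\angle\circ}(q,v)\ge cr$,
\[
\ln \angle sim_{wgeo}(Q,v) \le \sum_i w_i \ln\tfrac{1}{4} - \tfrac{4}{\pi^2}\sum_i w_i \angle_i(q,v)^2 \le \sum_i w_i \ln\tfrac{1}{4} - \tfrac{4cr}{\pi^2} = \ln(c'S').
\]
Dually, the lower bound $\ln(1-z)\ge -z/(1-a)$ with $a=4\psi_c^2/\pi^2$ yields (ii): when $d_{\angle\circ}(q,v)\le r$,
\[
\ln \angle sim_{wgeo}(Q,v) \ge \sum_i w_i \ln\tfrac{1}{4} - \tfrac{4r}{\pi^2 - 4\psi_c^2} = \ln S'.
\]

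The remaining ingredient, which I expect to be the main obstacle since it is where the standing precondition actually bites, is the uniform bound $\angle_i(q,v)\le \psi_c$ that is required to apply the dual inequality on every coordinate. I would derive this geometrically: $\angle_i(q,v)$ equals the minimum angle between $v$ and any nonzero vector of the hyperplane $H_i$ through the origin perpendicular to $e_i$; since $e_i\perp p$ we have $p\in H_i$, hence $\angle_i(q,v)\le \angle(v,p)\le \psi_c$ by the precondition on queries and data points. For (i), the inequality $c'<1$ reduces to $(c-1)\pi^2 > 4c\psi_c^2$, and substituting $\psi_c^2 = \tfrac{c-1}{c}\cdot \tfrac{\pi^2}{16}$ turns this into $(c-1)\pi^2 > (c-1)\pi^2/4$, which holds for all $c>1$. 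The asymmetry between the tight global upper bound on $\ln(1-z)$ and the only-locally-tight linear lower bound is what forces $S'$ to depend on $\psi_c$ and explains why the preceding reduction was tuned to guarantee exactly the angular-diameter condition $\angle(p,v)\le \psi_c$.
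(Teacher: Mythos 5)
Your proposal is correct and follows essentially the same route as the paper: verify $c'<1$, use Theorem~\ref{thm:probAntipodalPair} to rewrite $\angle sim_{wgeo}(Q,x)$ as $\prod_i\bigl(\tfrac14-\tfrac{\angle_i(q,x)^2}{\pi^2}\bigr)^{w_i}$, take logarithms, and bound $\ln$ linearly from above globally and from below on $[0,4\psi_c^2/\pi^2]$ using the precondition $\angle_i(q,x)\le\angle(p,x)\le\psi_c$. The paper phrases the two log bounds via Lagrange's mean value theorem rather than the $\ln(1-z)$ sandwich, but the resulting inequalities are identical, and your explicit justification that $\angle_i(q,x)\le\angle(p,x)$ (because $p$ lies in the hyperplane orthogonal to $e_i$) is exactly the observation the paper relegates to a footnote.
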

\begin{proof}
	Consider an angular ellipsoid query $ q= (p, \{e_i\}_{i=1}^d)$ for the angular ellipsoid structure, and let $ Q =\{e_1,-e_1,e_2,-e_2,\ldots,e_d,-e_d\}$ be the corresponding weighted set-query for the weighted geometric angular similarity $(S',c'S')$-\SLSH structure with the weights $\{w_1,w_1,w_2,w_2,\ldots,w_d,w_d\} $, as defined in the query phase. It suffices to prove that
	\begin{enumerate}
		\item $ c'<1$,
		\item $\forall x~s.t.~\EAngdst{q}{x}\leq r,~ \angle sim_{wgeo}(Q,x)\geq S'$, and
		\item $\forall x~s.t.~\EAngdst{q}{x}> cr,~\angle sim_{wgeo}(Q,x)< c'S'$.
	\end{enumerate}
	The proofs of these claims are as follows.
	\begin{enumerate}
		\item By the definition of $ c' $ we get that $\frac{\ln(c')}{-4r}=\frac{c}{\pi^2}  - \frac{1 }{\pi^2-4\psi_c^2}>\frac{c}{\pi^2}  - \frac{1 }{\pi^2-4\cdot \frac{c-1}{c}\cdot \frac{\pi^2}{4}}=0$, where the second equality follows since $\psi_c= \sqrt{\frac{c-1}{c} }\cdot \frac{\pi}{2} $. Thus, $ \ln(c')<0 $ so $ c'<1 $.
		\item Assume that $\EAngdst{q}{x}\leq r$. We prove that $ \angle sim_{wgeo}(Q,x)\geq S' $. Indeed, by Theorem~\ref{thm:probAntipodalPair} and since $ \forall i,~ \angle_i(q,x)\leq \angle(p,x)\leq\psi_c<\pi/2 $,\footnote{Here we use our assumption that any query $q= (p,\{e_i\}_{i=1}^d) $ and data point $ x $ in the angular ellipsoid structure satisfy $ \angle(p,x) \leq\sqrt{\frac{c-1}{c} }\cdot \frac{\pi}{4}=\psi_c $.} we have that
		\begin{align*}
		\angle sim_{wgeo}(Q,x)&=
		\prod_{i=1}^{d}\left(\angle sim(e_i,x)\right)^{w_i}\cdot \prod_{i=1}^{d}\left(\angle sim(-e_i,x)\right)^{w_i}\\
		&=\prod_{i=1}^{d}\left(\angle sim_{geo}(\{e_i,-e_i\},x)\right)^{w_i}\\
		&=\prod_{i=1}^{d}\left( \frac{1}{4}-\frac{\angle_i(q,x)^2}{\pi^2}\right)^{w_i}
		=e^{\sum_{i=1}^{d}w_i\cdot \ln\left(\frac{1}{4}-\frac{1}{\pi^2}\cdot \angle_i(q,x)^2\right)}\\
		&\geq e^{\sum_{i=1}^{d}w_i\cdot \left(\ln\left(\frac{1}{4}\right)-\frac{1}{\frac{1}{4}-\frac{1}{\pi^2}\cdot \angle_i(q,x)^2}\cdot \frac{\angle_i(q,x)^2}{\pi^2} \right)}\\
		&=e^{\sum_{i=1}^{d}w_i \left(\ln\left(\frac{1}{4}\right)-\frac{ 4\angle_i(q,x)^2}{\pi^2-4\angle_i(q,x)^2}\right)}
		\geq e^{\sum_{i=1}^{d}w_i \left(\ln\left(\frac{1}{4}\right)-\frac{ 4\angle_i(q,x)^2}{\pi^2-4\psi_c^2}\right)}\\
		&=e^{\sum_{i=1}^{d}w_i\cdot \ln\left(\frac{1}{4}\right) -\frac{ 4\EAngdst{q}{x}}{\pi^2-4\psi_c^2}}
		\geq e^{\sum_{i=1}^{d}w_i\cdot \ln\left(\frac{1}{4}\right) -\frac{ 4r}{\pi^2-4\psi_c^2}}=S',
		\end{align*}
		where the first and second equalities follow by the definition of the geometric and weighted geometric similarities, the first inequality follows since
		$\ln\left(\frac{1}{4}\right)-\ln\left(\frac{1}{4}-\frac{\angle_i(q,x)^2}{\pi^2} \right)=\frac{1}{\mu}\cdot \frac{\angle_i(q,x)^2}{\pi^2}$
		for some $\mu\in [\frac{1}{4}-\frac{\angle_i(q,x)^2}{\pi^2} ,\frac{1}{4}] $ by Lagrange's mean value theorem, and the fourth equality follows since $\pi^2-4\angle_i(q,x)^2\geq  \pi^2-4\psi_c^2\geq 0$.
		
		\item Assume that $\EAngdst{q}{x} > cr$. We prove that $\angle sim_{wgeo}(Q,x) < c'S'$. Indeed, as in the proof of claim 2,
		\begin{align*}
		\angle sim_{wgeo}(Q,x)&
		=e^{\sum_{i=1}^{d}w_i\cdot \ln\left(\frac{1}{4}-\frac{1}{\pi^2}\cdot \angle_i(q,x)^2\right)}
		\\&\leq e^{\sum_{i=1}^{d}w_i\cdot \left(\ln\left(\frac{1}{4}\right)-\frac{1}{1/4}\cdot \frac{\angle_i(q,x)^2}{\pi^2} \right)}\\
		&=e^{\sum_{i=1}^{d}w_i\ln\left(\frac{1}{4}\right)-\frac{4\EAngdst{q}{x}}{\pi^2} }
		< e^{\sum_{i=1}^{d}w_i\ln\left(\frac{1}{4}\right)-\frac{4cr}{\pi^2} }=c'S',
		\end{align*}
		where the first inequality follows since
		$\ln\left(\frac{1}{4}\right)-\ln\left(\frac{1}{4}-\frac{\angle_i(q,x)^2}{\pi^2} \right)=\frac{1}{\mu}\cdot \frac{\angle_i(q,x)^2}{\pi^2}$
		for some $\mu\in [\frac{1}{4}-\frac{\angle_i(q,x)^2}{\pi^2} ,\frac{1}{4}] $ by Lagrange's mean value theorem.
	\end{enumerate}
\end{proof}

\section{Missing proofs from Section~\ref{sec:cedforquerysize2}}
\begin{proof}[Proof of Lemma~\ref{lma:ellipsoidsBetweenConstantHeights}]
	Note that all the sets $ L^s,~S,~B$ and $L^b$ are spherically symmetric around the axis $ x_1 $ and symmetric with respect to the hyperplane $ x_1=0 $.
	We denote the distances of the boundaries of $ L^s,~S,~B$ and $L^b$ from the axis $ x_1 $ as functions of $ x_1 $ by $y_{L^s}(x_1),~y_{S}(x_1),~y_{B}(x_1)$ and $ y_{L^b}(x_1)$ respectively, which are plotted in Figure~\ref{fig:ellipsoidHierarchy}.
	
	Note that $ y_{L^s}$ and $y_S $ are defined only over the domain $|x_1|\leq r-a $, $ y_B $ is defined only over the domain $ |x_1|\leq  d_B$ for some $ d_B\in[r-a,cr-a) $,\footnote{$ d_B\geq r-a $, since if we substitute $ x_1 $ with $ r-a $ in the equation of $ B $, we get $ r^2-a^2+\sum_{i=2}^{d}x_i^2 \leq  \left(\frac{cr}{c_{\min}}\right)^2-a^2$, which holds for example for $ x_i=0 $, $ i=2,\ldots,d $, since $ c> c_{\min} $. $ d_B\leq cr-a $, since if we substitute $ x_1 $ with $ cr-a $ in the equation of $ B $, we get $ \frac{r+a}{r-a}\cdot \left(cr-a\right)^2+\sum_{i=2}^{d}x_i^2 \leq  \left(\frac{cr}{c_{\min}}\right)^2-a^2$. There are no $ x_i $'s $ i=2,\ldots,d $ that satisfy this inequality, since $ \frac{r+a}{r-a}\cdot \left(cr-a\right)^2 > \frac{cr+a}{cr-a} \cdot \left(cr-a\right)^2=\left(cr\right)^2-a^2>\left(\frac{cr}{c_{\min}}\right)^2-a^2$, where the first inequality follows since
		$ \frac{z+a}{z-a} $ is a strictly decreasing function for $ z>a $, and $ c>1$, and the last inequality follows since $ c_{\min}>1 $.} 
	and $ y_{L^b} $ is defined only over the domain $ |x_1|\leq cr-a $. Therefore, using the fact that $y_{L^s}(x_1),~y_{S}(x_1),~y_{B}(x_1)$ and $ y_{L^b}(x_1)$ are non-negative, it suffices to prove the following two claims.
	\begin{enumerate}
		\item $\forall x\in[0, r-a],~y_{L^s}^2(x)\leq y_{S}^2(x)\leq y_{B}^2(x)$,
		\item $\forall x\in[0, d_B],~ y_{B}^2(x) \leq y_{L^b}^2(x)$
	\end{enumerate}
	Since all points $ v=(x_1,\ldots,x_d) $ with $ x_1>0 $ are farther away from $ q_{-a} $ than $ q_a $, the intersections of the boundaries of $ L^s,~S,~B$ and $L^b$ with the hyperplane $ x_1=x $ for $ x>0 $ are defined respectively by the equations
	\begin{align*}
	&L^s[x]:~ \sqrt{(x+a)^2+y_{L^s}^2(x)} = r,\\
	&S[x]:~ \frac{r+a}{r-a}x^2+y_{S}^2(x) = r^2-a^2,\\
	&B[x]:~ \frac{r+a}{r-a}x^2+y_{B}^2(x) =  \left(\frac{cr}{c_{\min}}\right)^2-a^2\text{, and}\\
	&L^b[x]:~ \sqrt{(x+a)^2+y_{L^b}^2(x)} = c r.
	\end{align*}
	From these equations we get that
	
	\begin{enumerate}[i.]
		\item $ y_{L^s}^2(x)=r^2-(x+a)^2$\label{eq:yLs},
		\item $y_{S}^2(x) = r^2-a^2- \frac{r+a}{r-a}x^2 $\label{eq:yS},
		\item $y_{B}^2(x) =  \left(\frac{cr}{c_{\min}}\right)^2-a^2-\frac{r+a}{r-a}x^2$\label{eq:yB}, and
		\item $y_{L^b}^2(x) = (c r)^2- (x+a)^2$\label{eq:yLb}.
	\end{enumerate}
	
	We now establish claims 1 and 2 in order.
	\begin{enumerate}
		\item We show that $\forall x\in[0, r-a],~ y_{L^s}^2(x)\leq y_{S}^2(x)\leq y_{B}^2(x) $. Let $ x\in[0, r-a] $, and observe that
		\begin{align*}
		y_{S}^2(x)&\stackrel{\ref{eq:yS}}{=}r^2-a^2-\frac{r+a}{r-a}x^2
		=r^2-(x+a)^2-\frac{2a}{r-a}x^2+2ax\\
		&=r^2-(x+a)^2+\frac{2ax}{r-a}\underbrace{(r-a-x)}_{\geq 0}\geq r^2-(x+a)^2\stackrel{\ref{eq:yLs}}{=}y_{L^s}^2(x).
		\end{align*}
		Moreover,
		\begin{align*}
		y_{S}^2(x)&\stackrel{\ref{eq:yS}}{=}r^2-a^2-\frac{r+a}{r-a}x^2\leq \left(\frac{cr}{c_{\min}}\right)^2-a^2-\frac{r+a}{r-a}x^2\stackrel{\ref{eq:yB}}{=}y_{B}^2(x),
		\end{align*}
		where the first inequality follows since $  c>c_{\min} $.
		
		\item We prove the stronger claim that $\forall x,~ y_{B}^2(x) \leq y_{L^b}^2(x)$. Indeed, for any $ x $ observe that
		\begin{align*}
		y_{L^b}^2(x)-y_{B}^2(x)&\stackrel{\ref{eq:yLb},\ref{eq:yB}}{=} (c r)^2- (x+a)^2-\left(\left(\frac{cr}{c_{\min}}\right)^2-a^2-\frac{r+a}{r-a}x^2\right)\\
		&=(cr)^2- x^2-2ax -\left(\frac{cr}{c_{\min}}\right)^2+\frac{r+a}{r-a}x^2\\
		&=\left(\frac{r+a}{r-a}-1\right)x^2-2ax+c^2\left(1-1/c^2_{\min}\right) r^2\\
		&>\left(\frac{r+a}{r-a}-1\right)x^2-2ax+c^2_{\min}\left(1-1/c^2_{\min}\right) r^2\\
		&=\frac{2a}{r-a}x^2-2ax+\left(c^2_{\min}-1\right) r^2\\
		&=\frac{2a}{r-a}x^2-2ax+\frac{1}{8} r^2\\
		&=\frac{2}{r-a}\cdot \left(ax^2-a(r-a)x+\frac{1}{16}\left(r^3-ar^2\right)\right),
		\end{align*}
		where the inequality follows since $c> c_{\min} $ and $ c_{\min}>1 $, and the fifth inequality follows since $ c_{\min}=\frac{3}{2\sqrt{2}} $.
		It remains to show that the second term in the previous equation is non-positive for all $ x$, i.e., to show that $ \forall x,~ t(x):=ax^2-a(r-a)x+\frac{1}{16}(r^3-ar^2)\geq 0 $. We divide $ t(x) $ by $ a> 0 $ and show that $  \forall x\in ,~\gamma(x):=x^2-(r-a)x+\frac{1}{16a}(r^3-ar^2)\geq 0 $.
		To do so, we show that the discriminant $ \Delta(\gamma)=(r-a)^2-\frac{1}{4a}(r^3-ar^2) = (r-a)\left(r-a-\frac{1}{4a}r^2\right)=-\frac{1}{4a}\cdot (r-a) (r^2 - 4a r+4a^2)= -\frac{1}{4a}\cdot (r-a) (r-2a)^2$ is non-positive. Indeed, we have that $ r\geq a $ and $ a>0 $, and we finish.
	\end{enumerate}
\end{proof}

\begin{proof}[Proof of Theorem~\ref{thm:1centerToEuclideanEllipsoidReductionSLSH}]
	Let $\tilde{Q}= \{\tilde{q_1},\tilde{q_2}\} $ be a set-query, and let $ a=\frac{1}{2} \norm{\tilde{q_1}-\tilde{q_2}} \in [0,(1-\phi)r)$.
	By the definition of the structure, we make an ellipsoid query $ (\tilde{p},\{\tilde{e}_i\}_{i=1}^d) $ with $ \tilde{p}=\frac{1}{2}(\tilde{q_1}+\tilde{q_2}) $, and the first axis $ \tilde{e}_1 $ is the unit vector in the direction of $ \tilde{q_1}-\tilde{q_2} $, and all other axes complete $ \tilde{e}_1 $ to an orthonormal basis.
	For the rest of the proof, we change the coordinate system such that it is centered in $ \tilde{p} $ and the directions of the axes are $\{\tilde{e}_i\}_{i=1}^d$. In this system, the set-query is $ \{q_a,q_{-a}\} $ where $ q_a=(a,0\ldots,0) $, $ q_{-a}=(-a,0\ldots,0)$, and we query the euclidean ellipsoid $(r',c'r')$-\ALSH structure with $ ((0,\ldots,0),\{e_i\}_{i=1}^d) $, where $ \{e_i\}_{i=1}^d $ is the standard basis. Moreover, a point $ \tilde{v} $ in the original system, is represented by $v= (x_1,\ldots,x_d) $ in the new system, where $ x_i=e_i^T(\tilde{v}-\tilde{p})  $ for all $ i\in [d] $.
	It therefore suffices to show that all data points $ v $ for which
	$ \max \left(\norm{v-q_a},\norm{v-q_{-a}}\right)\leq r $ are within $ r' $ euclidean ellipsoid distance from the query $ q $, and all data points $ v $ for which $ \max \left(\norm{v-q_a},\norm{v-q_{-a}}\right)> cr $ are at euclidean ellipsoid distance strictly larger than $ c'r' $  from the query $ q $.
	
	Specifically, let $ a'=\left\lceil\frac{a}{\delta}\right\rceil \cdot \delta $, let
	\begin{align*}
	&S^+=\left\{(x_1,\ldots,x_d) \mid \frac{r+a'}{r-a'}x_1^2+\sum_{i=2}^{d}x_i^2 \leq \frac{c}{c_{\min}}\cdot \left(r^2-(a')^2\right)\right\}, \text{ and let }\\
	&B^-=\left\{(x_1,\ldots,x_d) \mid \frac{r+a'}{r-a'}x_1^2+\sum_{i=2}^{d}x_i^2 \leq   \left(\frac{c}{c_{\min}}\right)^2\cdot \left(r^2-(a')^2\right)\right\}
	\end{align*}
	as defined in Section~\ref{sec:cedforquerysize2}. We prove that $L^s\subseteq S^+$ and $B^-\subseteq L^b$.
	By Lemma~\ref{lma:ellipsoidsBetweenConstantHeights}, $ L^s\subseteq S $ and $ B\subseteq L^b $, so it suffices to show that
	\begin{enumerate}
		\item $S\subseteq S^+$, and
		\item $B^-\subseteq B$.
	\end{enumerate}
	We now prove these two claims one after another.
	\begin{enumerate}
		\item Fix $ v=(x_1,\ldots,x_d) $ such that $ v \in S$, i.e., $\frac{r+a}{r-a}x_1^2+\sum_{i=2}^d x_i^2\leq r^2-a^2$. We show that $ v\in S^+ $. Indeed,
		\begin{align*}
		\frac{r+a'}{r-a'} x_1^2+\sum_{i=2}^d x_i^2
		&= \left(\frac{r+a'}{r-a'}\cdot \frac{r-a}{r+a}\right)\cdot \frac{r+a}{r-a}x_1^2+\sum_{i=2}^d x_i^2\\
		&\leq  \left(\frac{r+a'}{r-a'}\cdot \frac{r-a}{r+a}\right)\cdot \frac{r+a}{r-a}x_1^2+\left(\frac{r+a'}{r-a'}\cdot \frac{r-a}{r+a}\right)\cdot \sum_{i=2}^d x_i^2\\
		&=\left(\frac{r+a'}{r-a'}\cdot \frac{r-a}{r+a}\right)\cdot \left(\frac{r+a}{r-a}x_1^2+\sum_{i=2}^d x_i^2\right)\\
		&\leq \left(\frac{r+a'}{r-a'}\cdot \frac{r-a}{r+a}\right)\cdot \left(r^2-a^2\right)\\
		&=\frac{(r+a')\cdot (r-a)\cdot (r+a)\cdot (r-a)}{(r-a')\cdot (r+a)}\\
		&=\frac{(r+a')\cdot (r-a)^2}{(r-a')}=\frac{(r-a)^2}{(r-a')^2}\cdot (r^2-(a')^2)\\
		&\leq \left(\frac{r-a}{r-a-\delta}\right)^2\cdot (r^2-(a')^2)
		\leq \left(\frac{\phi r}{\phi r-\delta}\right)^2\cdot (r^2-(a')^2)\\
		&\leq \frac{c}{c_{\min}}\cdot \left(r^2-(a')^2\right).
		\end{align*}
		The first inequality follows since $ \frac{r+x}{r-x} $ is an increasing function for $x\in [0,r]  $ and $ a'\geq a\in [0,r] $, and the second inequality follows since $ v\in S $.
		The third inequality follows since $ r-a'\geq r-a-\delta\geq 0 $ as we argued in Section~\ref{sec:cedforquerysize2}, and the fourth inequality follows since $ \frac{x}{x-\delta} $ is decreasing for $ x>\delta $, since $ r-a\geq \phi r $, and since $ \phi r\geq \delta $.
		The last inequality follows since $ \frac{\phi r}{\phi r-\delta}\leq \sqrt{\frac{c}{c_{\min}}} $ because $ \delta \leq \left(1-\sqrt{\frac{c_{\min}}{c}}\right)\cdot \phi r $.
		
		\item Fix $ v=(x_1,\ldots,x_d) $ such that $ v\notin B $, i.e., $\frac{r+a}{r-a}x_1^2+\sum_{i=2}^d x_i^2>\left( \frac{cr}{c_{\min}}\right)^2-a^2$.We show that $ v\notin B^- $. Indeed,
		\begin{align*}
		\frac{r+a'}{r-a'} x_1^2+\sum_{i=2}^d x_i^2
		&\geq \frac{r+a}{r-a} x_1^2+\sum_{i=2}^d x_i^2
		> \left( \frac{cr}{c_{\min}}\right)^2-a^2
		\geq \left( \frac{cr}{c_{\min}}\right)^2-(a')^2\\
		&\geq \left( \frac{c}{c_{\min}}\right)^2\left(r^2-(a')^2\right).
		\end{align*}
		The first inequality follows since $ \frac{r+x}{r-x} $ is an increasing function for $x\in [0,r]  $ and $ a'\geq a $, the second inequality follows since $  v\notin B $, the third inequality follows since $ a'\geq a $, and the last inequality follows since $ c> c_{\min} $.
	\end{enumerate}
\end{proof}

\end{document}